\documentclass[11pt]{article}
\usepackage[a4paper,margin=25mm]{geometry}
\usepackage{amsmath,amssymb,amsthm,mathtools,bm}
\usepackage{booktabs}
\usepackage{longtable}
\usepackage{array}
\usepackage{multirow}
\usepackage{etoolbox}
\usepackage{pgfplots}
\usepackage[colorlinks=true,linkcolor=blue,urlcolor=blue,citecolor=blue]{hyperref}
\usepackage{graphicx}
\usepackage{adjustbox}
\pgfplotsset{compat=1.18}

\newcommand{\Ftwo}{\mathbb{F}_2}
\newcommand{\Z}{\mathbb{Z}}

\newcommand{\Ker}{\mathrm{Ker}}
\newcommand{\Row}{\mathrm{Row}}
\newcommand{\rank}{\mathrm{rank}}
\newcommand{\wt}{\mathrm{wt}}
\newcommand{\affexpr}[2]{\ensuremath{#1x+#2}}
\newcommand{\diag}{\mathrm{diag}}
\newcommand{\Hhat}{\hat{H}}
\newcommand{\Htil}{\tilde{H}}

\newcommand{\dbound}[1]{\ensuremath{\bar{d}^{\mathrm{#1}}}}
\newcommand{\dlat}{\dbound{lat}}
\newcommand{\drls}{\dbound{rls}}
\newcommand{\dblk}{\dbound{blk}}
\newcommand{\dfib}{\dbound{fib}}
\newcommand{\dcrt}{\dbound{crt}}
\newcommand{\ddir}{\dbound{dir}}
\newcommand{\dets}{\dbound{ets}}
\newcommand{\ddec}{\dbound{dec}}
\newcommand{\ubhit}[1]{\begingroup\setlength{\fboxsep}{1pt}\colorbox{gray!18}{\strut #1}\endgroup}
\newcommand{\markededit}[1]{#1}
\newcommand{\vect}[1]{\bm{#1}}
\newcommand{\vx}{\vect{x}}
\newcommand{\vy}{\vect{y}}
\newcommand{\vz}{\vect{z}}
\newcommand{\vu}{\vect{u}}
\newcommand{\vvct}{\vect{v}}

\newcommand{\ve}{\vect{e}}
\newcommand{\vlambda}{\vect{\lambda}}

\newtheoremstyle{uprightthm}
  {}{}{\normalfont\upshape}{}{\bfseries}{.}{ }{}
\theoremstyle{uprightthm}
\newtheorem{theorem}{Theorem}[section]
\newtheorem{lemma}[theorem]{Lemma}
\newtheorem{definition}[theorem]{Definition}
\newtheorem{proposition}[theorem]{Proposition}

\newtheorem{corollary}[theorem]{Corollary}
\newtheorem{example}[theorem]{Example}

\AtBeginEnvironment{theorem}{\pushQED{\qed}}
\AtEndEnvironment{theorem}{\popQED}
\AtBeginEnvironment{lemma}{\pushQED{\qed}}
\AtEndEnvironment{lemma}{\popQED}
\AtBeginEnvironment{definition}{\pushQED{\qed}}
\AtEndEnvironment{definition}{\popQED}
\AtBeginEnvironment{proposition}{\pushQED{\qed}}
\AtEndEnvironment{proposition}{\popQED}
\AtBeginEnvironment{remark}{\pushQED{\qed}}
\AtEndEnvironment{remark}{\popQED}
\AtBeginEnvironment{corollary}{\pushQED{\qed}}
\AtEndEnvironment{corollary}{\popQED}
\AtBeginEnvironment{example}{\pushQED{\qed}}
\AtEndEnvironment{example}{\popQED}

\title{Heuristic Search for Minimum-Distance Upper-Bound Witnesses in Quantum APM-LDPC Codes}
\author{Kenta Kasai\\
Institute of Science Tokyo\\
\texttt{kenta@ict.eng.isct.ac.jp}}
\date{}

\begin{document}
\maketitle

\begin{abstract}
This paper investigates certified upper bounds on the minimum distance of an
explicit family of Calderbank-Shor-Steane quantum LDPC codes constructed from
affine permutation matrices. All codes considered here have active Tanner graphs
of girth eight. Rather than attempting to prove a general lower bound for the
full code distance, we focus on constructing low-weight non-stabilizer logical
representatives, which yield valid upper bounds once they are verified to lie in
the opposite parity-check kernel and outside the stabilizer row space. We develop
a unified framework for such witnesses arising from latent row relations,
restricted-lift subspaces including block-compressed, selected-fiber, and
CRT-stripe constructions, cycle-8 elementary trapping-set structures, and
decoder-failure residuals. In every case, search is used only to generate
candidates; the reported bounds begin only after explicit kernel and row-space
exclusion tests have been passed. For the latent part, we also identify a
block-compression criterion under which the certification becomes exact.
Applying these methods to representative APM-LDPC codes sharpens previously
reported upper bounds and provides concrete certified values across the explored
parameter range.
\end{abstract}

\section{Introduction}
Low-density parity-check (LDPC) codes use sparse parity-check matrices, or
equivalently sparse Tanner graphs, to support efficient decoding while retaining
nontrivial minimum distance~\cite{Gallager1963,Tanner1981}. In the classical
setting, random regular ensembles, expander/Tanner constructions,
quasi-cyclic constructions, and protograph liftings give different ways to
combine sparse local constraints with global
distance~\cite{SipserSpielman1996,Fossorier2004,MitchellSmarandacheCostello2014}.
In quantum coding theory, these classical constructions are directly relevant through
Calderbank--Shor--Steane (CSS) codes, which connect quantum codes to pairs of
classical linear codes satisfying an orthogonality relation~\cite{CalderbankShor1996,Steane1996}.

For quantum LDPC codes, however, the classical design principles do not apply
without additional constraints. One needs two sparse parity-check matrices that
are not only individually useful but also mutually orthogonal in the CSS sense.
This compatibility condition appears in early sparse-graph quantum code
constructions~\cite{MacKay2004,HagiwaraImai2007}, in hypergraph-product codes
with positive rate and square-root distance~\cite{TillichZemor2014}, and in
modern constructions of asymptotically good quantum LDPC codes and quantum
Tanner codes~\cite{PanteleevKalachev2022,LeverrierZemor2022,DinurHsiehLinVidick2023}. Much of that work
gives lower bounds or asymptotic existence results.
Recent related work also studies near-bound finite-length constructions,
random quantum-LDPC constructions, and finite-degree
Gilbert--Varshamov-type constructions~\cite{KomotoKasai2025,OkadaKasai2025,KasaiGV2026}.

For minimum-distance evaluation, a lower bound for the full code distance would
be the strongest conclusion, because it \markededit{gives}
guaranteed error-correction capability rather than merely excluding larger
values. In the present family, however, the known lower-bound techniques do not
transfer directly to the full distance, and the classical literature on exact
minimum-weight search is primarily effective at short and medium blocklengths.
Even the elementary tree-bound baseline for sparse Tanner graphs~\cite{Tanner1981}
gives only a small constant lower-bound baseline for the active girth-eight graphs used below; it
does not \markededit{explain} the larger distance values observed in the computations.
This paper therefore records explicit upper-bound vectors and verifies them
inside one specified construction family. More detailed comments on the lower-bound literature and on the
classical short-code literature are deferred to later sections.

Affine permutation matrix (APM) constructions impose the CSS orthogonality
condition in a controlled way~\cite{MyungYangPark2006,KasaiEtAl2011,KomotoKasai2025,KasaiGirth16APM2025,Kasai2026}.
Instead of imposing commutativity on the full parent matrix, one enforces
orthogonality only on the active part that is actually used as the CSS
stabilizer, while allowing controlled non-commutativity in the latent part.
This produces high-girth binary CSS LDPC codes together with explicit latent
row spaces that may contain low-weight logical representatives.  The closest antecedent is~\cite{Kasai2026}, which developed the
same structured family primarily from the construction and decoding side. The
present paper is a follow-up on that same construction method, focused on how
tightly one can upper-bound and update the minimum distance within the family.
Recent nonbinary and affine-permutation studies of quantum-LDPC decoding and
high-girth design provide complementary finite-length evidence for the same
design direction~\cite{KasaiISIT2025,KasaiDegeneracy2025,KomotoKasaiSharpBP2025,KasaiGirth16APM2025}.
Following that terminology, we refer to this family as quantum APM-based LDPC
(APM-LDPC) codes.
A standing hypothesis in this paper is that every concrete code analyzed or
reported below has active Tanner graphs of girth 8. This girth-eight condition
is part of the search constraint, not only a regularity assumption.  It
excludes four- and six-cycles, so the first cycle-based local
supports used by the elementary trapping set (ETS) method are built from
8-cycles.

The choice of lift sizes in the present search is not arbitrary. The companion
manuscript on affine commutation patterns~\cite{KasaiCommPattern2026} shows
that the cross-commuting nonabelian affine pattern needed on the construction
side is not available for prime-power moduli and is available for
Chinese remainder theorem (CRT)-split moduli. For that reason, the
computational search here is organized around non-prime-power lift sizes.

This leads to a limited question. The family proposed
in~\cite{Kasai2026} may have distance that grows with blocklength.  The present
paper does not try to decide this.  Instead, it asks which low-weight logical
operators are found by the searches used here and then \markededit{pass} the stated
kernel and row-space tests.  The upper bound reported in~\cite{Kasai2026} is updated, while the
overall asymptotic behavior of the family remains unsettled.

The paper is therefore deliberately split into a proof part and a
computational verification part.  The mathematical part states sufficient conditions under which
a vector in a specified subspace is a non-stabilizer logical representative.
The computational verification part supplies vectors found by searches and checks
those conditions for representative codes.  Search procedures are used only to
find candidate vectors; the upper-bound claims in the paper begin only after the
appropriate parity-check-kernel and row-space-exclusion tests have been \markededit{passed}.
A compact classification of the upper-bound methods is included at the end of
Section~\ref{sec:framework-en}.

\section{Parent, Active, and Latent Matrices}
\label{sec:framework-en}
Let
\[
\Hhat_X\in\Ftwo^{r_X\times n},\qquad
\Hhat_Z\in\Ftwo^{r_Z\times n}
\]
be binary parent matrices with the same number of columns.
Choose subsets of rows and call them the active rows.
The complementary rows are called latent.
After permuting rows if necessary, we may write
\[
\Hhat_X=\begin{bmatrix}H_X\\ \Htil_X\end{bmatrix},\qquad
\Hhat_Z=\begin{bmatrix}H_Z\\ \Htil_Z\end{bmatrix},
\]
where $H_X,H_Z$ are the active matrices and $\Htil_X,\Htil_Z$ are the latent matrices.
This section makes no APM or block-size assumptions; it only defines the
overall, latent, and non-latent distances attached to this decomposition.

Vectors are regarded as elements of the ambient binary vector space, and their
orientation is chosen only when a matrix product is written.  For
$A\in\Ftwo^{r\times n}$, we write
\[
\Ker(A):=\{\vu\in\Ftwo^n:A\vu=0\}.
\]
We also write $\Row(A)\subseteq\Ftwo^n$ for the subspace spanned by the rows of
$A$.  For a binary vector $\vu$, $\wt(\vu)$ denotes its Hamming weight, namely
the number of nonzero coordinates.  When a row-space element is displayed as
$\vx=\vlambda^{\mathsf T}A$,
$\vlambda$ is a column coefficient vector and $\vx$ is the corresponding row
combination.  Syndrome equations are then written using the transpose, for
example $B\vx^{\mathsf T}=0$; equivalently, a displayed row vector $\vx$
belongs to $\Ker(B)$ precisely when $B\vx^{\mathsf T}=0$.  This convention is
used throughout to avoid introducing separate symbols for the same binary vector
in row and column form.

\begin{definition}[Overall, latent, and non-latent distances]
\label{def:latent-nonlatent-distances-en}
From this point on, assume that $H_XH_Z^{\mathsf T}=0$ and define the CSS constituent codes by
\[
C_X:=\Ker(H_X),\qquad C_Z:=\Ker(H_Z).
\]
The usual CSS distances are
\[
d_X=\min\{\wt(\vx):\vx\in C_Z\setminus C_X^\perp\},\qquad
d_Z=\min\{\wt(\vz):\vz\in C_X\setminus C_Z^\perp\}.
\]
To isolate the contribution coming directly from the latent row spaces, define
\[
d_X^{(\mathrm{lat})}:=
\min\{\wt(\vx):\vx\in (C_Z\cap \Row(\Htil_X))\setminus C_X^\perp\},
\]
\[
d_Z^{(\mathrm{lat})}:=
\min\{\wt(\vz):\vz\in (C_X\cap \Row(\Htil_Z))\setminus C_Z^\perp\}.
\]
To measure the complementary contribution outside the latent row spaces, define
\[
d_X^{(\mathrm{nlat})}:=
\min\{\wt(\vx):\vx\in C_Z\setminus C_X^\perp,\ \vx\notin \Row(\Htil_X)\},
\]
\[
d_Z^{(\mathrm{nlat})}:=
\min\{\wt(\vz):\vz\in C_X\setminus C_Z^\perp,\ \vz\notin \Row(\Htil_Z)\}.
\]
All four minima are interpreted as $+\infty$ if the corresponding set is empty.
Then
\[
d_X=\min\{d_X^{(\mathrm{lat})},d_X^{(\mathrm{nlat})}\},
\qquad
d_Z=\min\{d_Z^{(\mathrm{lat})},d_Z^{(\mathrm{nlat})}\}.
\]
\end{definition}

This definition separates logical representatives contained in the latent row
spaces from logical representatives outside those row spaces.  Later sections
state upper-bound criteria for both classes.

Before specializing to APMs, Table~\ref{tab:upper-bound-methods-en} fixes the
notation for the upper-bound methods used later.  A reported upper-bound
witness means an explicitly given vector that satisfies the relevant
parity-check-kernel condition and is outside the corresponding stabilizer row
space.  The latent and restricted-lift methods specify subspaces in which such
vectors are sought.  The direct, cycle-8 ETS, and decoder-failure methods
specify other ways to generate candidate vectors or supports.  In every case,
the upper-bound claim is the same: the final vector must pass the stated kernel
and row-space tests.

\begin{table}[t]
\centering
\caption{Upper-bound methods.}
\label{tab:upper-bound-methods-en}
\small
\setlength{\tabcolsep}{5pt}
\renewcommand{\arraystretch}{1.12}
\begin{tabular}{p{0.38\linewidth}p{0.17\linewidth}p{0.30\linewidth}}
\toprule
Name & Notation & Statement \\
\midrule
latent upper bound & \dlat & Proposition~\ref{prop:latent-upper-en} \\
full-fiber block-compression upper bound & \dblk & Proposition~\ref{prop:nonlatent-upper-en} \\
\markededit{proper selected-fiber restricted-lift upper bound} & \dfib & Proposition~\ref{prop:fiber-quotient-upper-en} \\
CRT-stripe restricted-lift upper bound & \dcrt & Proposition~\ref{prop:crt-compression-upper-en} \\
direct CSS-search upper bound & \ddir & Proposition~\ref{prop:direct-css-upper-en} \\
cycle-8 ETS upper bound, with stage & \dets & Corollary~\ref{cor:cycle8-css-upper-en} \\
decoder-failure upper bound & \ddec & Proposition~\ref{prop:decoder-failure-upper-en} \\
\bottomrule
\end{tabular}
\end{table}

The notation in Table~\ref{tab:upper-bound-methods-en} records displayed
weights of \markededit{vectors found by the searches and checked by the stated tests}; it is not a new set of
intrinsic distance parameters and it is not the result of an exhaustive
minimization over all possible witnesses.  For
\[
\star\in\{\mathrm{lat},\mathrm{blk},\mathrm{fib},\mathrm{crt},
\mathrm{dir},\mathrm{ets},\mathrm{dec}\},
\]
an entry
\[
\bar d_X^{\star}=w
\qquad
\text{or}
\qquad
\bar d_Z^{\star}=w
\]
means that \markededit{the corresponding method has produced an $X$- or
$Z$-type logical representative of weight $w$}.  If \markededit{no vector of the specified
side has been reported}, the corresponding value is displayed as ``--'' in the
tables.  \markededit{Thus each finite entry is an upper bound on the
corresponding CSS distance because it is backed by an explicit witness satisfying
the stated tests, but it is not claimed to be the smallest value obtainable by
that method.}  A later search may add a smaller entry.
The notation \drls is only an aggregate label for the three restricted-lift
methods \dblk, \dfib, and \dcrt; the tables display these three component
entries separately.
\markededit{In the tables, \dblk records the full-fiber case \(S=\Z/m\Z\), while \dfib
records proper selected-fiber patterns \(\emptyset\neq S\subsetneq\Z/m\Z\).
Thus the displayed \dfib column is not minimized over the full-fiber case
already reported in \dblk.  If one merged the two columns into a single
fiber-pattern value, the merged entry would be at most \dblk, but the separated
table intentionally keeps the full-fiber and proper-fiber witnesses distinct.}

\section{APM Specialization}
\label{sec:apm-specialization-en}
We now specialize the general parent/active/latent framework to the APM family used throughout the computational part.
Let $L$ be even and let $F_0,\dots,F_{L/2-1}$ and $G_0,\dots,G_{L/2-1}$ be $P\times P$ permutation matrices. We define the parent matrices of the APM-LDPC family by
\[
(\Hhat_X)_{i,j}=F_{j-i},\qquad
(\Hhat_X)_{i,L/2+j}=G_{j-i},
\]
\[
(\Hhat_Z)_{i,j}=G_{i-j}^{\mathsf T},\qquad
(\Hhat_Z)_{i,L/2+j}=F_{i-j}^{\mathsf T},
\]
with all indices interpreted modulo $L/2$.
For the active choice used in this paper, the first $J$ block rows are taken as active and the remaining block rows are latent, in the sense of Section~\ref{sec:framework-en}.

For each $r\in \Z/(L/2)\Z$, define
\[
\Psi_r:=\sum_{u=0}^{L/2-1}\bigl(F_uG_{r-u}+G_{r-u}F_u\bigr)
\]
over $\Ftwo$.
The blocks of $\Hhat_X\Hhat_Z^{\mathsf T}$ depend only on the row difference and are given by the corresponding $\Psi_r$.

For the standard active choice, let
\[
\Delta:=\{(k-i)\bmod(L/2):0\le i,k\le J-1\}.
\]
This sufficient criterion is already present in the $(3,12)$ setting of~\cite{Kasai2026}; here we state it in the general $(J,L)$ form.
Indeed, the $(i,k)$ block of $H_XH_Z^{\mathsf T}$ is $\Psi_{k-i}$ with indices taken modulo $L/2$.
Since $0\le i,k\le J-1$, the difference $k-i$ always belongs to $\Delta$.
Thus the active orthogonality condition is guaranteed by
\[
\Psi_r=0\qquad (r\in\Delta).
\]
Thus CSS orthogonality is equivalent to the vanishing of the mixed products
$\Psi_r$ for the active residues.

In the concrete family treated computationally in this paper we fix
\[
J=3,\qquad L=12,\qquad L/2=6.
\]
Then
\[
\Delta=\{0,1,2,4,5\},
\]
so the only unconstrained residue is $r=3$. The search problem is therefore to realize
\[
\Psi_r=0\quad (r\neq 3),\qquad \Psi_3\neq 0,
\]
so that the only nonzero active--latent mixed product allowed by the template is
$\Psi_3$.

Each block is an affine permutation
\[
x\mapsto ax+b\pmod P,\qquad \gcd(a,P)=1.
\]
This representation is used for two reasons. First, commutativity reduces to a
linear congruence. Second, if $P=mQ$, the same affine rule induces a
well-defined permutation modulo $Q$, which is the algebraic fact used in block
compression.

\section{Latent Upper Bounds}
\label{sec:latent-en}
This section evaluates the latent distances for a general regular APM-LDPC
construction.  Let the active row blocks be indexed by $0,\dots,J-1$ and the
latent row blocks by $J,\dots,L/2-1$, and assume $H_XH_Z^{\mathsf T}=0$.

\begin{lemma}[General active--latent mixed product]
\label{lem:latent-mixed-product-en}
For every active index $0\le i\le J-1$ and latent index $0\le \ell\le L/2-J-1$,
\[
[H_Z\Htil_X^{\mathsf T}]_{i,\ell}
=\Psi_{(J+\ell-i)\bmod (L/2)}^{\mathsf T},
\qquad
[H_X\Htil_Z^{\mathsf T}]_{i,\ell}
=\Psi_{(J+\ell-i)\bmod (L/2)}.
\]
\end{lemma}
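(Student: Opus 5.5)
The plan is a direct block-by-block computation of the parent product $\Hhat_X\Hhat_Z^{\mathsf T}$. Both mixed products in the statement are then read off as sub-blocks of it, or of its transpose. The active/latent split of Section~\ref{sec:framework-en} is a row selection: $H_X,H_Z$ are block rows $0,\dots,J-1$, and $\Htil_X,\Htil_Z$ are block rows $J,\dots,L/2-1$. Hence latent index $\ell$ corresponds to parent block row $J+\ell$, and
\[
[H_X\Htil_Z^{\mathsf T}]_{i,\ell}=[\Hhat_X\Hhat_Z^{\mathsf T}]_{i,J+\ell},\qquad
[H_Z\Htil_X^{\mathsf T}]_{i,\ell}=[\Hhat_Z\Hhat_X^{\mathsf T}]_{i,J+\ell}=\bigl([\Hhat_X\Hhat_Z^{\mathsf T}]_{J+\ell,i}\bigr)^{\mathsf T}.
\]
So everything reduces to one formula for a general block $(a,b)$ of $\Hhat_X\Hhat_Z^{\mathsf T}$.

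First I would compute that block. Block row $a$ of $\Hhat_X$ has column blocks $F_{j-a}$ and $G_{j-a}$. Block row $b$ of $\Hhat_Z$, transposed, contributes $G_{b-j}$ and $F_{b-j}$. This gives
\[
\sum_{j}\bigl(F_{j-a}G_{b-j}+G_{j-a}F_{b-j}\bigr).
\]
I would substitute $u=j-a$ in the first sum. In the second sum I would substitute $u=b-j$, which is a bijection of $\Z/(L/2)\Z$. The result is $\sum_u(F_uG_{b-a-u}+G_{b-a-u}F_u)=\Psi_{b-a}$, which makes precise the paper's remark that the blocks "depend only on the row difference". Taking $a=i$ and $b=J+\ell$ then gives the second identity $[H_X\Htil_Z^{\mathsf T}]_{i,\ell}=\Psi_{(J+\ell-i)\bmod(L/2)}$ directly.

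For the first identity, the same formula with $(a,b)=(J+\ell,i)$ gives $\Psi_{i-J-\ell}^{\mathsf T}$. I expect the main obstacle to be matching this with the stated index $J+\ell-i$. To close the gap I would try to show the reflection symmetry $\Psi_{-r}=\Psi_r$ for the residues $r=J+\ell-i$ that actually occur. The first attempt would exploit that the residues satisfy $r\notin\Delta$ and that $\Delta=-\Delta$, so $-r\notin\Delta$ as well.

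I am not confident this step works in general. I have not found a general-$(J,L)$ argument for the reflection symmetry, and the first identity as stated may need a consistent ordering convention for $\Psi$ to be settled in the general case. In the computational setting $J=3$, $L=12$ the reflection is automatic. The only residue outside $\Delta$ is $r=3\equiv-3$. Every other residue lies in $\Delta$ and has its negative in $\Delta$, so both $\Psi_r$ and $\Psi_{-r}$ vanish there. The two expressions for $H_Z\Htil_X^{\mathsf T}$ therefore coincide in that case. The whole argument uses only the index arithmetic modulo $L/2$ and the bijectivity of the reindexings, not the affine form of the permutations.
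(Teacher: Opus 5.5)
Your block computation is correct and matches the paper's one-line proof, which identifies the $(i,\ell)$ block as the parent mixed product of active row block $i$ with latent row block $J+\ell$; you simply carry it out explicitly. Block $(a,b)$ of $\Hhat_X\Hhat_Z^{\mathsf T}$ is $\Psi_{b-a}$. This gives the second identity at once, and it gives
\[
[H_Z\Htil_X^{\mathsf T}]_{i,\ell}=\bigl([\Hhat_X\Hhat_Z^{\mathsf T}]_{J+\ell,i}\bigr)^{\mathsf T}=\Psi^{\mathsf T}_{(i-J-\ell)\bmod(L/2)}.
\]
The step you could not close is not a missing idea on your side. The reflection $\Psi_{-r}=\Psi_r$ is false in general, so the first identity as printed fails for general $(J,L)$.

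Here is a counterexample. Take $L/2>2J$, and let $A,B$ be non-commuting APMs, e.g.\ $x\mapsto -x$ and $x\mapsto x+1$ modulo $P\ge 3$. Set $F_0=A$ and $G_J=B$, with all other blocks equal to the identity.
\begin{itemize}
\item For $u\neq 0$, the $u$-th term of $\Psi_r$ is $G_{r-u}+G_{r-u}=0$. Hence $\Psi_r=AG_r+G_rA$, which equals $AB+BA\neq 0$ for $r=J$ and vanishes for every other $r$.
\item Since $J\notin\Delta$, active orthogonality holds.
\item Since $-J\not\equiv J\pmod{L/2}$, we have $\Psi_{-J}=0$.
\end{itemize}
At $(i,\ell)=(0,0)$ the true block is $\Psi_{-J}^{\mathsf T}=0$, whereas the lemma asserts $\Psi_J^{\mathsf T}\neq 0$.

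So what you have actually proved is a corrected lemma: the first identity holds with index $(i-J-\ell)\bmod(L/2)$, and the second identity holds as stated. The paper's proof asserts the index $J+\ell-i$ without computing it, and so loses the sign introduced by the transposition.

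The printed form is valid whenever $L/2\le 2J$. In that range every occurring residue $r$ either has $r,-r\in\Delta$, so both blocks vanish, or satisfies $r\equiv -r$. This is exactly the argument you gave for $(J,L)=(3,12)$. The example above shows the printed form can fail otherwise.

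Nothing downstream breaks. Proposition~\ref{prop:latent-upper-en} and Theorem~\ref{thm:latent-lower-en} are phrased in terms of $H_Z\Htil_X^{\mathsf T}$ itself. The only explicit use of the block formula, $H_Z\Htil_X^{\mathsf T}=\diag(\Psi_3^{\mathsf T},\Psi_3^{\mathsf T},\Psi_3^{\mathsf T})$, is in the $(3,12)$ case. You should therefore state the corrected formula rather than try to prove the reflection symmetry.
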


\begin{proof}
The $(i,\ell)$ block of $H_Z\Htil_X^{\mathsf T}$ is the parent mixed product between active row block $i$ and latent row block $J+\ell$, which is exactly
\[
\Psi_{(J+\ell-i)\bmod(L/2)}^{\mathsf T}.
\]
The $H_X\Htil_Z^{\mathsf T}$ formula is analogous.
\end{proof}

The lemma reduces the latent upper-bound problem from the full parent matrix to the kernel of an explicit mixed product. The next proposition states the resulting latent upper-bound criterion, in the same spirit as the latent-distance analysis developed for this APM framework in earlier work~\cite{Kasai2026}. Operationally, finding one feasible column coefficient vector $\vlambda$ is enough: the row combination $\vlambda^{\mathsf T}\Htil_X$ or $\vlambda^{\mathsf T}\Htil_Z$ is then an explicit logical representative, and its weight gives the upper bound.

\begin{proposition}[Latent upper bounds from mixed-product kernels]
\label{prop:latent-upper-en}
The latent distances admit the exact parameterizations
\[
d_X^{(\mathrm{lat})}
=
\min\left\{
\wt(\vlambda^{\mathsf T}\Htil_X):
\vlambda\in(\Ftwo^P)^{L/2-J},\ 
H_Z\Htil_X^{\mathsf T}\vlambda=0,\ 
\vlambda^{\mathsf T}\Htil_X\notin C_X^\perp
\right\},
\]
\[
d_Z^{(\mathrm{lat})}
=
\min\left\{
\wt(\vvct^{\mathsf T}\Htil_Z):
\vvct\in(\Ftwo^P)^{L/2-J},\ 
H_X\Htil_Z^{\mathsf T}\vvct=0,\ 
\vvct^{\mathsf T}\Htil_Z\notin C_Z^\perp
\right\},
\]
with the convention that the minimum is $+\infty$ when the corresponding set is empty.

In particular, every feasible coefficient vector
$\vlambda\in(\Ftwo^P)^{L/2-J}$ satisfying
$H_Z\Htil_X^{\mathsf T}\vlambda=0$ and
$\vlambda^{\mathsf T}\Htil_X\notin C_X^\perp$
produces
$\vx:=\vlambda^{\mathsf T}\Htil_X\in (C_Z\cap \Row(\Htil_X))\setminus C_X^\perp$,
hence $d_X^{(\mathrm{lat})}\le \wt(\vx)$ and therefore $d_X\le \wt(\vx)$.
The $Z$-side statement is identical.
\end{proposition}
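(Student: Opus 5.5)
The plan is to prove the $X$-side parameterization by exhibiting a correspondence between the feasible coefficient vectors $\vlambda$ and the vectors in the set $(C_Z\cap\Row(\Htil_X))\setminus C_X^\perp$ from Definition~\ref{def:latent-nonlatent-distances-en}. The $Z$-side statement then follows by exchanging the roles of $X$ and $Z$. The correspondence need not be injective; we only need the two sets of weights to coincide. The first step is to unwind membership in $\Row(\Htil_X)$. A vector $\vx$ lies in $\Row(\Htil_X)$ exactly when $\vx=\vlambda^{\mathsf T}\Htil_X$ for some $\vlambda$. Here $\vlambda$ has one coordinate per latent row, and there are $(L/2-J)P$ latent rows, so $\vlambda\in(\Ftwo^P)^{L/2-J}$.

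The second step translates the kernel condition. We have $\vx\in C_Z=\Ker(H_Z)$ iff $H_Z\vx^{\mathsf T}=0$. Substituting $\vx^{\mathsf T}=\Htil_X^{\mathsf T}\vlambda$ turns this into $H_Z\Htil_X^{\mathsf T}\vlambda=0$. By Lemma~\ref{lem:latent-mixed-product-en}, this is the explicit block system with blocks $\Psi^{\mathsf T}_{(J+\ell-i)\bmod(L/2)}$, although that lemma is only needed for the operational interpretation and not for the equality itself. The third condition, $\vx\notin C_X^\perp$, is copied verbatim.

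Putting these together, the map $\vlambda\mapsto\vlambda^{\mathsf T}\Htil_X$ sends the feasible set of the proposition onto $(C_Z\cap\Row(\Htil_X))\setminus C_X^\perp$, and it preserves the quantity $\wt(\vx)$ being minimized. Surjectivity holds because every element of that set has at least one coefficient vector, and any coefficient vector for it automatically satisfies all three constraints. Hence the two sets of attained weights coincide. Both minima are $+\infty$ exactly when the sets are empty, which handles the convention. This gives the exact equality.

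The ``in particular'' clause is then immediate. A feasible $\vlambda$ gives an $\vx$ in the set defining $d_X^{(\mathrm{lat})}$, so $d_X^{(\mathrm{lat})}\le\wt(\vx)$. The identity $d_X=\min\{d_X^{(\mathrm{lat})},d_X^{(\mathrm{nlat})}\}$ from Definition~\ref{def:latent-nonlatent-distances-en} yields $d_X\le\wt(\vx)$. Equivalently, one can note directly that the latent set is a subset of $C_Z\setminus C_X^\perp$.

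The only point needing care is non-injectivity: $\Htil_X$ may have dependent rows, so distinct $\vlambda$ can give the same $\vx$. The plan addresses this by phrasing the argument as equality of the image sets rather than as a bijection. The implicit assumption $H_XH_Z^{\mathsf T}=0$ is needed only so that $C_X^\perp\subseteq C_Z$ makes the CSS distances meaningful; it plays no role in the set equality.
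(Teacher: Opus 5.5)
Your proposal is correct and follows essentially the same route as the paper. Both arguments show that $\vlambda\mapsto\vlambda^{\mathsf T}\Htil_X$ maps the feasible coefficient set onto $(C_Z\cap\Row(\Htil_X))\setminus C_X^\perp$, phrased as an equality of image sets so that dependent latent rows cause no trouble, and both then obtain $d_X\le\wt(\vx)$ from $d_X=\min\{d_X^{(\mathrm{lat})},d_X^{(\mathrm{nlat})}\}$.
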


\begin{proof}
We prove the $X$-side identity. By Definition~\ref{def:latent-nonlatent-distances-en},
\[
d_X^{(\mathrm{lat})}
=
\min\{\wt(\vx):\vx\in (C_Z\cap \Row(\Htil_X))\setminus C_X^\perp\}.
\]
Let
\[
\mathcal{L}_X
:=
\left\{
\vlambda\in(\Ftwo^P)^{L/2-J}:
H_Z\Htil_X^{\mathsf T}\vlambda=0,\ 
\vlambda^{\mathsf T}\Htil_X\notin C_X^\perp
\right\}.
\]

If $\vlambda\in\mathcal{L}_X$ and $\vx:=\vlambda^{\mathsf T}\Htil_X$, then
$\vx\in\Row(\Htil_X)$ and
$H_Z\vx^{\mathsf T}=H_Z\Htil_X^{\mathsf T}\vlambda=0$,
so $\vx\in C_Z$; by definition of $\mathcal{L}_X$, we also have
$\vx\notin C_X^\perp$.
Hence every feasible coefficient vector produces an element of
$(C_Z\cap \Row(\Htil_X))\setminus C_X^\perp$.

Conversely, if $\vx\in (C_Z\cap \Row(\Htil_X))\setminus C_X^\perp$,
then $\vx\in\Row(\Htil_X)$, so there exists some
$\vlambda\in(\Ftwo^P)^{L/2-J}$ with
$\vx=\vlambda^{\mathsf T}\Htil_X$.
Because $\vx\in C_Z$,
$H_Z\Htil_X^{\mathsf T}\vlambda=H_Z\vx^{\mathsf T}=0$,
and since $\vx\notin C_X^\perp$, this $\vlambda$ belongs to $\mathcal{L}_X$.

Therefore the image set
\[
\{\vlambda^{\mathsf T}\Htil_X:\vlambda\in\mathcal{L}_X\}
\]
coincides exactly with
\[
(C_Z\cap \Row(\Htil_X))\setminus C_X^\perp.
\]
This proves the displayed formula for $d_X^{(\mathrm{lat})}$.
Finally, since
$d_X=\min\{d_X^{(\mathrm{lat})},d_X^{(\mathrm{nlat})}\}$,
every latent witness also yields the overall upper bound $d_X\le \wt(\vx)$.
The $Z$-side argument is identical.
\end{proof}

This proposition is the latent upper-bound statement used later in the paper.
Construction of the coefficient vectors is deferred to
Section~\ref{sec:results-en}, where the upper-bound criteria are applied to
explicit examples.

The exact latent lower-bound theory based on block-constant compression, together
with its rank-test and satisfiability (SAT) / satisfiability modulo theories (SMT) certification machinery, is placed in
Appendix~\ref{app:latent-lower-en}.  The main text next gives upper-bound
statements for restricted lifted subspaces.

\section{Restricted-Lift Structural Upper Bounds}
\label{sec:nonlatent-structural-en}

Section~\ref{sec:latent-en} treated logical representatives contained in the
latent row spaces.  From this section onward, we look for logical
representatives outside those latent row spaces; these give the non-latent side
of the upper-bound theory.  The common formulation is to restrict the search to
the image of a linear map into the full length-$LP$ ambient space.  A vector
found in such a subspace gives an upper bound only after it is checked to lie in
the appropriate parity-check kernel and outside the stabilizer row space.  The
general statement uses a linear map
\[
\Phi:\Ftwo^r\longrightarrow \Ftwo^{LP}.
\]
The following proposition records the search-independent condition:
a vector in this image that passes the kernel and row-space tests gives an
upper bound.  The three choices of $\Phi$ used in this paper are introduced below in the full-fiber,
fiber-quotient, and CRT-compression cases of
Subsections~\ref{sec:full-fiber-block-en}, \ref{sec:fiber-quotient-en}, and
\ref{sec:crt-compression-en}.

\begin{proposition}[Restricted-lift upper bound]
\label{prop:restricted-lift-upper-en}
Let $\Phi_X:\Ftwo^{r_X'}\to\Ftwo^{LP}$ and
$\Phi_Z:\Ftwo^{r_Z'}\to\Ftwo^{LP}$ be linear maps.  If
$\bar{\vx}\in\Ftwo^{r_X'}$ satisfies
\[
H_Z\Phi_X(\bar{\vx})^{\mathsf T}=0,
\qquad
\Phi_X(\bar{\vx})\notin\Row(H_X),
\]
then $\Phi_X(\bar{\vx})$ is a valid $X$ logical representative and
\[
d_X\le \wt(\Phi_X(\bar{\vx})).
\]
If in addition $\Phi_X(\bar{\vx})\notin\Row(\Htil_X)$, then
\[
d_X^{(\mathrm{nlat})}\le \wt(\Phi_X(\bar{\vx})).
\]
The analogous statement with $X$ and $Z$ interchanged also holds.
\end{proposition}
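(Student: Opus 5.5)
The plan is to reduce everything to Definition~\ref{def:latent-nonlatent-distances-en}, using only linearity of $\Phi_X$ and the CSS identity $C_X^\perp=\Row(H_X)$. Set $\vx:=\Phi_X(\bar{\vx})\in\Ftwo^{LP}$; this is an ordinary binary vector of length $n=LP$. First, the hypothesis $H_Z\vx^{\mathsf T}=0$ says exactly that $\vx\in\Ker(H_Z)=C_Z$. Second, since $C_X=\Ker(H_X)$, standard linear algebra over $\Ftwo$ gives $C_X^\perp=\Ker(H_X)^\perp=\Row(H_X)$. The hypothesis $\vx\notin\Row(H_X)$ is therefore the same as $\vx\notin C_X^\perp$. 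Together these give $\vx\in C_Z\setminus C_X^\perp$. In particular $\vx\neq 0$, since $0\in C_X^\perp$.

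By the definition of $d_X$ as the minimum of $\wt$ over $C_Z\setminus C_X^\perp$, the set is nonempty and contains $\vx$, so $d_X\le\wt(\vx)$. This is exactly the statement that $\vx$ is a valid $X$ logical representative. For the refined claim, the extra hypothesis $\vx\notin\Row(\Htil_X)$ places $\vx$ in the set
\[
\{\vx\in C_Z\setminus C_X^\perp:\ \vx\notin\Row(\Htil_X)\}
\]
whose minimum weight defines $d_X^{(\mathrm{nlat})}$. This gives $d_X^{(\mathrm{nlat})}\le\wt(\vx)$. The $Z$ statement follows by the same argument with the roles of $X$ and $Z$ exchanged, using $C_Z^\perp=\Row(H_Z)$ and the standing orthogonality $H_XH_Z^{\mathsf T}=0$, which makes the definition symmetric.

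The argument is almost purely definitional. The only step requiring care is the identification $C_X^\perp=\Row(H_X)$. This holds because for a finite-dimensional space the kernel of $H_X$ is the orthogonal complement of its row space, and $(V^\perp)^\perp=V$ over $\Ftwo$ for the standard bilinear form. That identity holds for any subspace, despite the form being degenerate on some subspaces, because the form is nondegenerate on the whole of $\Ftwo^n$. Nothing about $\Phi_X$ beyond its landing in $\Ftwo^{LP}$ is used; linearity matters only for the later searches, not for this proposition.
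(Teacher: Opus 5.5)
Your proof is correct and follows essentially the same route as the paper. The kernel condition gives membership in $C_Z$, the identity $\Row(H_X)=C_X^\perp$ gives non-stabilizer status, and the non-latent bound is read off from Definition~\ref{def:latent-nonlatent-distances-en}. Your justification of $\Row(H_X)=C_X^\perp$ and your remark that linearity of $\Phi_X$ plays no role here are accurate points that the paper leaves implicit.
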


\begin{proof}
The $X$ side follows directly from the CSS distance definition:
$H_Z\Phi_X(\bar{\vx})^{\mathsf T}=0$ says that the lifted vector lies in
$C_Z$, while $\Phi_X(\bar{\vx})\notin\Row(H_X)=C_X^\perp$ says that it is not a
stabilizer.  Hence it represents a nontrivial $X$ logical operator.  The
additional condition $\Phi_X(\bar{\vx})\notin\Row(\Htil_X)$ places the same
representative outside the latent row space, giving the non-latent bound by
Definition~\ref{def:latent-nonlatent-distances-en}.  The $Z$ side is identical.
\end{proof}

We next treat the block-constant lift, the fiber-pattern lift, and the CRT
stripe subspace.  The first two are written explicitly as maps $\Phi$, while
the CRT case is written as a restricted subspace.  The symbol \drls is used only
as a collective label for these restricted-lift entries; the component entries
\dblk, \dfib, and \dcrt are reported separately.

\subsection{Block-Constant Quotient for Full-Fiber Block Compression}
\label{sec:classical-mblock-en}

Assume $P=mQ$ with $m\ge 2$. For $t\in\Z_Q$, define the coset
\[
[t]_m:=\{t,t+Q,t+2Q,\dots,t+(m-1)Q\}\subset\Z_P.
\]
A length-$P$ vector is called \emph{$m$-block constant} if it is constant on every coset $[t]_m$.

Let $U_m(P)\subset \Ftwo^P$ be the $m$-block-constant subspace. Define the compression and lift maps
\[
\pi_m:U_m(P)\to\Ftwo^Q,\qquad \iota_m:\Ftwo^Q\to U_m(P)
\]
by
\[
[\pi_m(\vx)]_t:=\vx_t,\qquad
[\iota_m(\bar{\vx})]_{t+jQ}:=\bar{\vx}_t.
\]
Applying these maps blockwise over $N$ blocks yields
\[
\pi_{m,N}:U_m(NP)\to\Ftwo^{NQ},\qquad
\iota_{m,N}:\Ftwo^{NQ}\to U_m(NP),
\]
with
\[
\wt(\iota_{m,N}(\bar{\vx}))=m\,\wt(\bar{\vx}).
\]
If the relevant matrices preserve block-constant structure, this construction
replaces a length-$P$ equation by a length-$Q$ quotient equation.  This is the
APM version of the quotient operation used for classical quasi-cyclic LDPC code constructions~\cite{Fossorier2004,MitchellSmarandacheCostello2014}.

\begin{lemma}[Descent of APMs on block-constant subspaces]
\label{lem:apm-descent-en}
Let $M$ be the $P\times P$ APM corresponding to $x\mapsto ax+b\pmod P$, and let $\bar M$ be the quotient APM obtained by reducing the same affine rule modulo $Q=P/m$. Then $M$ preserves $U_m(P)$. Moreover, for every $\bar{\vu}\in\Ftwo^Q$,
\[
M\iota_m(\bar{\vu})=\iota_m(\bar M\bar{\vu}),\qquad
\pi_m(M\iota_m(\bar{\vu}))=\bar M\bar{\vu}.
\]
The same commutativity holds blockwise for any block matrix made of APM blocks.
\end{lemma}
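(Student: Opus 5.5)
The plan is a direct index computation. The key fact is that, because $Q\mid P$, the affine rule $\sigma(x)=ax+b \pmod P$ is compatible with reduction modulo $Q$. I will fix the convention that $M$ is the permutation matrix with $M\ve_x=\ve_{\sigma(x)}$, so that $(M\vu)_{\sigma(x)}=\vu_x$, equivalently $(M\vu)_y=\vu_{\sigma^{-1}(y)}$. If the paper uses the transposed convention, the same argument applies verbatim. Indeed, $M^{\mathsf T}$ is the APM of the inverse affine map $x\mapsto a^{-1}(x-b)$, which is again affine with unit slope, and its reduction modulo $Q$ is the inverse of the reduced map.

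\emph{Step 1: the quotient map is a well-defined permutation.} If $x\equiv x' \pmod Q$, then $ax+b\equiv ax'+b \pmod Q$. Hence $\bar\sigma(t):=at+b \bmod Q$ satisfies $\sigma(x)\bmod Q=\bar\sigma(x\bmod Q)$. Since $\gcd(a,P)=1$ and $Q\mid P$, we also have $\gcd(a,Q)=1$, so $\bar\sigma$ is a bijection of $\Z_Q$ and $\bar M$ is a genuine $Q\times Q$ APM. In coset language, $\sigma$ maps $[t]_m$ into $[\bar\sigma(t)]_m$. Since $\sigma$ is a bijection of $\Z_P$, $\bar\sigma$ is a bijection of $\Z_Q$, and all cosets have size $m$, $\sigma$ in fact maps each coset bijectively onto its image coset.

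\emph{Step 2: preservation and intertwining.} Let $\vu\in U_m(P)$, so that $\vu_x$ depends only on $x\bmod Q$. Then
\[
(M\vu)_y=\vu_{\sigma^{-1}(y)},
\]
and $\sigma^{-1}(y)\bmod Q=\bar\sigma^{-1}(y\bmod Q)$ by Step 1 applied to the inverse. So $(M\vu)_y$ depends only on $y\bmod Q$, which gives $M U_m(P)\subseteq U_m(P)$. Now take $\vu=\iota_m(\bar\vu)$. For $y=t+jQ$,
\[
(M\iota_m(\bar\vu))_{t+jQ}=\bar\vu_{\bar\sigma^{-1}(t)}=(\bar M\bar\vu)_t=(\iota_m(\bar M\bar\vu))_{t+jQ},
\]
which proves $M\iota_m(\bar\vu)=\iota_m(\bar M\bar\vu)$. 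Applying $\pi_m$, which satisfies $\pi_m\circ\iota_m=\mathrm{id}$, gives the second identity.

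\emph{Step 3: block matrices.} For a block matrix whose blocks are APMs $M_{ij}$ (or zero blocks), the $i$-th output block of $\iota_{m,N}(\bar\vu)$ is $\sum_j M_{ij}\iota_m(\bar\vu_j)$. By Step 2 this equals $\sum_j\iota_m(\bar M_{ij}\bar\vu_j)$, and by linearity of $\iota_m$ this is $\iota_m\bigl(\sum_j\bar M_{ij}\bar\vu_j\bigr)$. Because $U_m(P)$ is a linear subspace, preservation also passes to the block matrix.

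The only step needing care is Step 1: checking that $\bar\sigma$ is a bijection, which uses $\gcd(a,Q)=1$. The rest is bookkeeping, so there is no real obstacle beyond keeping the matrix/permutation convention consistent.
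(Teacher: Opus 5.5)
Your proof is correct and follows essentially the same route as the paper's. Both arguments rest on the observation that the affine rule is compatible with reduction modulo $Q$, so each coset $[t]_m$ is sent to $[at+b]_m$, which gives both preservation of $U_m(P)$ and the intertwining identity. You also make explicit two points the paper leaves implicit: that $\gcd(a,Q)=1$, so $\bar M$ is a genuine APM, and that each coset is mapped onto its image coset rather than merely into it.
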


\begin{proof}
By definition of an APM,
\[
M\ve_t=\ve_{at+b}.
\]
For every $t\in\Z_Q$ and $s\in\{0,\dots,m-1\}$,
\[
a(t+sQ)+b\equiv at+b\pmod Q,
\]
so the affine permutation maps the coset $[t]_m$ to $[at+b]_m$. Hence the image of an $m$-block-constant vector is again $m$-block constant, and the action on cosets is exactly the quotient affine map modulo $Q$. The displayed identities follow directly from the definitions of $\pi_m$ and $\iota_m$.
\end{proof}

This lemma shows that block compression is an exact quotient/lift operation.
It is the algebraic statement underlying the CSS full-fiber
restricted-lift special case below: a compressed vector satisfying the quotient
check equation can be lifted to a block-constant vector satisfying the original
check equation.
The next subsection adds the CSS row-space tests needed to make the lifted vector a logical representative.

\subsection{Full-Fiber Block-Compression Special Case}
\label{sec:nonlatent-en}
\label{sec:full-fiber-block-en}
This subsection obtains non-latent upper bounds from the block-constant lift.
The additional non-latent test is literal: on the $X$ side the lifted vector
must be outside $\Row(\Htil_X)$, and on the $Z$ side it must be outside
$\Row(\Htil_Z)$.

Fix $m\mid P$ and consider the same subspace
\[
U_m(LP)\subset \Ftwo^{LP}.
\]
Since each APM descends modulo $Q=P/m$, the active matrices descend to compressed check matrices
\[
\bar H_X^{(m)},\bar H_Z^{(m)}\in\Ftwo^{JQ\times LQ}.
\]
Before using them, one must verify that the active check equations are preserved by compression.

\begin{lemma}[Compression equivalence for active checks]
\label{lem:nonlatent-compression-en}
For every $\vx\in U_m(LP)$ and $\bar{\vx}=\pi_{m,L}(\vx)$,
\[
H_Z\vx^{\mathsf T}=0
\iff
\bar H_Z^{(m)}\bar{\vx}^{\mathsf T}=0.
\]
Similarly, for every $\vz\in U_m(LP)$ and $\bar{\vz}=\pi_{m,L}(\vz)$,
\[
H_X\vz^{\mathsf T}=0
\iff
\bar H_X^{(m)}\bar{\vz}^{\mathsf T}=0.
\]
Moreover,
\[
\wt(\vx)=m\,\wt(\bar{\vx}),\qquad \wt(\vz)=m\,\wt(\bar{\vz}).
\]
\end{lemma}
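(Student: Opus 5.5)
The proof applies Lemma~\ref{lem:apm-descent-en} block by block and uses the fact that $\iota_m$ is injective. First I fix what $\bar H_Z^{(m)}$ is. It is the $J\times L$ block matrix obtained from $H_Z$ by replacing every $P\times P$ APM block $M$ with its quotient $\bar M$, the same affine rule reduced modulo $Q$. This descent is well defined because $\gcd(a,P)=1$ implies $\gcd(a,Q)=1$, and because $b$ simply reduces modulo $Q$. The blocks of $H_Z$ are the transposes $G_{i-j}^{\mathsf T}$ and $F_{i-j}^{\mathsf T}$. The transpose of an APM is the matrix of the inverse affine map $x\mapsto a^{-1}(x-b)$, which is again an APM. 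Reducing it modulo $Q$ gives the inverse of the reduced map, so the compression of $M^{\mathsf T}$ is $\bar M^{\mathsf T}$. Hence it does not matter whether one compresses before or after transposing, and Lemma~\ref{lem:apm-descent-en} applies to every block of $H_Z$ and of $H_X$.

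For $\vx\in U_m(LP)$, write $\vx=\iota_{m,L}(\bar{\vx})$ with $\bar{\vx}=\pi_{m,L}(\vx)$. These maps are mutually inverse, since a block-constant vector is determined by its values on the representatives $t\in\Z_Q$. Split $\bar{\vx}$ into $L$ pieces $\bar{\vx}_j\in\Ftwo^Q$. Row block $i$ of $H_Z\vx^{\mathsf T}$ equals $\sum_j M_{ij}\iota_m(\bar{\vx}_j)$. By Lemma~\ref{lem:apm-descent-en}, each term equals $\iota_m(\bar M_{ij}\bar{\vx}_j)$. Since $\iota_m$ is linear over $\Ftwo$, the sum equals $\iota_m\bigl(\sum_j \bar M_{ij}\bar{\vx}_j\bigr)$. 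Doing this for all row blocks gives the intertwining identity
\[
H_Z\,\iota_{m,L}(\bar{\vx})^{\mathsf T}=\iota_{m,J}\bigl(\bar H_Z^{(m)}\bar{\vx}^{\mathsf T}\bigr).
\]
The map $\iota_{m,J}$ is injective, because every coordinate of $\bar{\vu}$ is copied at least once into its lift. So the left side vanishes if and only if $\bar H_Z^{(m)}\bar{\vx}^{\mathsf T}=0$. The $X$-side statement follows by the identical argument with the APM blocks of $H_X$.

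The weight identity comes from $\wt(\iota_{m,N}(\bar{\vx}))=m\,\wt(\bar{\vx})$, already noted above: each coordinate of $\bar{\vx}$ is repeated on exactly $m$ positions of its coset, and these cosets are disjoint. Applying this with $N=L$ and $\vx=\iota_{m,L}(\bar{\vx})$ gives $\wt(\vx)=m\,\wt(\bar{\vx})$, and likewise for $\vz$.

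The only step that needs care is the transpose issue in the first paragraph. One must confirm that the blocks $G^{\mathsf T}$ and $F^{\mathsf T}$ of $H_Z$ are themselves APMs whose quotients are the corresponding blocks of $\bar H_Z^{(m)}$. This amounts to checking that the inverse affine map, $a^{-1}\bmod P$ applied after subtracting $b$, reduces modulo $Q$ to the inverse of the reduced map. That holds because reduction modulo $Q$ is a ring homomorphism from $\Z_P$ to $\Z_Q$. Everything else is bookkeeping with the linear maps $\iota$ and $\pi$.
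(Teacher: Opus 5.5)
Your proof is correct and follows the same route as the paper. You apply Lemma~\ref{lem:apm-descent-en} blockwise to get $H_Z\,\iota_{m,L}(\bar{\vx})^{\mathsf T}=\iota_{m,J}(\bar H_Z^{(m)}\bar{\vx}^{\mathsf T})$, conclude by injectivity of the lift, and read the weight off $\wt(\iota_{m,L}(\bar{\vx}))=m\,\wt(\bar{\vx})$. Your extra check, that the transposed blocks $G^{\mathsf T},F^{\mathsf T}$ are APMs of the inverse affine maps and that reduction modulo $Q$ commutes with inversion, fills in a detail the paper's one-line proof leaves implicit without changing the argument.
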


\begin{proof}
Apply Lemma~\ref{lem:apm-descent-en} blockwise to the quotient blocks of the active matrices. The active check vector of a lifted block-constant vector is exactly the lifted check vector computed by the compressed matrix. The weight formula follows from the lift.
\end{proof}

This lemma states that a quotient vector satisfying the compressed kernel
condition lifts to a vector satisfying the original active-check kernel
condition exactly.

\begin{proposition}[Full-fiber block-compression special case]
\label{prop:nonlatent-upper-en}
Let $m\mid P$ and $Q=P/m$. Take $\bar{\vx}\in\Ker(\bar H_Z^{(m)})$ and let $\vx=\iota_{m,L}(\bar{\vx})$. If $\vx\notin \Row(H_X)$ and $\vx\notin \Row(\Htil_X)$, then $\vx$ is a valid $X$ logical representative outside the latent row space, and $d_X\le \wt(\vx)=m\,\wt(\bar{\vx})$. Similarly, if $\bar{\vz}\in\Ker(\bar H_X^{(m)})$ lifts to $\vz=\iota_{m,L}(\bar{\vz})$ with $\vz\notin \Row(H_Z)\cup \Row(\Htil_Z)$, then $d_Z\le \wt(\vz)=m\,\wt(\bar{\vz})$.
\end{proposition}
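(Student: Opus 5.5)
The plan is to realize the full-fiber case as an instance of Proposition~\ref{prop:restricted-lift-upper-en} with the choice $\Phi_X:=\iota_{m,L}$, the blockwise lift $\Ftwo^{LQ}\to U_m(LP)\subset\Ftwo^{LP}$. This map is linear, since it copies each coordinate $\bar{\vx}_t$ into the $m$ positions of the coset $[t]_m$ within each length-$P$ block. That proposition then reduces everything to three checks:
\begin{itemize}
\item the lifted vector $\vx=\iota_{m,L}(\bar{\vx})$ lies in $C_Z=\Ker(H_Z)$;
\item $\vx$ lies outside $\Row(H_X)=C_X^\perp$;
\item $\vx$ lies outside $\Row(\Htil_X)$, for the non-latent conclusion.
\end{itemize}
The two row-space conditions are assumed in the statement, so only the kernel condition and the weight identity need proof.

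For the kernel condition I will invoke Lemma~\ref{lem:nonlatent-compression-en}. The vector $\vx=\iota_{m,L}(\bar{\vx})$ belongs to $U_m(LP)$ by construction, and $\pi_{m,L}(\vx)=\bar{\vx}$ because $\pi_m\circ\iota_m$ is the identity on $\Ftwo^Q$. The lemma's equivalence, read right to left, then turns $\bar H_Z^{(m)}\bar{\vx}^{\mathsf T}=0$ into $H_Z\vx^{\mathsf T}=0$. The same lemma, or directly the identity $\wt(\iota_{m,N}(\bar{\vx}))=m\,\wt(\bar{\vx})$, gives $\wt(\vx)=m\,\wt(\bar{\vx})$.

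Proposition~\ref{prop:restricted-lift-upper-en} now says that $\vx$ is a valid $X$ logical representative, so $d_X\le\wt(\vx)$. Adding the hypothesis $\vx\notin\Row(\Htil_X)$ also gives $d_X^{(\mathrm{nlat})}\le\wt(\vx)$, which is the phrase ``outside the latent row space.'' The $Z$ side is symmetric: take $\Phi_Z=\iota_{m,L}$ and use the second equivalence of Lemma~\ref{lem:nonlatent-compression-en} with $H_X$ and $\bar H_X^{(m)}$.

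The only step that is more than bookkeeping is the kernel transfer, and it rests entirely on Lemma~\ref{lem:nonlatent-compression-en}. If I were to re-verify that step, I would work one block entry at a time. Each block of $H_Z$ is either zero or an APM; transposes of affine permutations are again affine permutations, namely the inverse maps. Each row-block sum is therefore a sum of APMs applied to block-constant vectors. Lemma~\ref{lem:apm-descent-en} commutes each term with $\iota_m$, and linearity of $\iota_m$ commutes it with the sum. The resulting identity $H_Z\iota_{m,L}(\bar{\vx})^{\mathsf T}=\iota_{m,J}(\bar H_Z^{(m)}\bar{\vx}^{\mathsf T})$, together with injectivity of $\iota_{m,J}$, gives the equivalence.
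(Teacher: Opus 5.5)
Your proposal is correct and follows the paper's proof: you obtain $H_Z\vx^{\mathsf T}=0$ from Lemma~\ref{lem:nonlatent-compression-en}, use the two row-space hypotheses to make $\vx$ a non-stabilizer $X$ logical representative outside $\Row(\Htil_X)$, and read off $\wt(\vx)=m\,\wt(\bar{\vx})$ from the lift. Routing the conclusion through Proposition~\ref{prop:restricted-lift-upper-en} with $\Phi_X=\iota_{m,L}$ is only a repackaging of the CSS-definition step that the paper applies directly.
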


\begin{proof}
By Lemma~\ref{lem:nonlatent-compression-en}, the lift $\vx$ satisfies $H_Z\vx^{\mathsf T}=0$, so $\vx\in C_Z$. If additionally $\vx\notin \Row(H_X)=C_X^\perp$, then $\vx$ is a valid $X$ logical representative. The extra condition $\vx\notin \Row(\Htil_X)$ ensures that it lies outside the latent row space. Hence $d_X\le \wt(\vx)=m\,\wt(\bar{\vx})$. The $Z$ case is the same.
\end{proof}

Concrete use of Proposition~\ref{prop:nonlatent-upper-en} is explained in
Section~\ref{sec:results-en}.

The block-constant lift is only the first member of the restricted-lift
family.  The next two subsections keep
Proposition~\ref{prop:restricted-lift-upper-en} as the underlying \markededit{sufficient condition}
while enlarging the allowed lifted subspace.

\subsection{Fiber-Quotient Special Case}
\label{sec:fiber-quotient-en}
The fiber-quotient bound lifts each quotient bit only to a selected fiber
pattern.  \markededit{The proposition below is stated for every nonempty pattern; in the
tables, the notation \dfib is reserved for proper patterns
\(\emptyset\neq S\subsetneq\Z/m\Z\), because the full pattern is already
reported as \dblk.}

\begin{definition}[Fiber-pattern lift and fiber-quotient subspace]
\label{def:fiber-quotient-subspace-en}
Let $m\mid P$, put $Q=P/m$, and fix a nonempty pattern
$S\subseteq \Z/m\Z$.  Define the single-block fiber-pattern lift
\[
\phi_{m,S}:\Ftwo^Q\longrightarrow \Ftwo^P
\]
by
\[
(\phi_{m,S}(\vy))_{t+kQ}
:=
\begin{cases}
y_t, & k\in S,\\
0, & k\notin S,
\end{cases}
\qquad
t\in\Z_Q,\ k\in\Z/m\Z .
\]
The corresponding length-$LP$ lift is the blockwise direct sum
\[
\Phi_{m,S,L}:=\bigoplus_{j=0}^{L-1}\phi_{m,S}:
\Ftwo^{LQ}\longrightarrow\Ftwo^{LP}.
\]
Its image
\[
\mathcal F_{m,S}:=\operatorname{im}(\Phi_{m,S,L})
\subseteq \Ftwo^{LP}
\]
is called the fiber-quotient subspace of pattern $S$.
\end{definition}

In particular, when $S=\Z/m\Z$, one has $\mathcal F_{m,S}=U_m(LP)$.
Thus block compression is the full-fiber special case of fiber quotient.

Proper patterns $S\subsetneq\Z/m\Z$ give additional candidates that are
supported only on selected fibers, with up to $2^m-1$ candidate patterns for
each divisor $m$.  If $M_{a,b}$ denotes the APM for
$x\mapsto ax+b\pmod P$, then in general
\[
M_{a,b}\operatorname{im}(\phi_{m,S})\not\subseteq \operatorname{im}(\phi_{m,S}).
\]
The method therefore restricts the original active checks to the image of $\Phi_{m,S,L}$:
\[
H_{Z}^{(m,S)}:=H_Z\Phi_{m,S,L},\qquad
H_{X}^{(m,S)}:=H_X\Phi_{m,S,L}.
\]
These are ordinary binary matrices with $LQ$ columns.  They exactly encode the
parity-check products of lifted fiber-pattern vectors.

\begin{lemma}[Check restriction for fiber quotients]
\label{lem:fiber-quotient-syndrome-en}
For every $\bar{\vx}\in\Ftwo^{LQ}$ and
$\vx=\Phi_{m,S,L}(\bar{\vx})$,
\[
H_Z\vx^{\mathsf T}=0
\iff
H_Z^{(m,S)}\bar{\vx}^{\mathsf T}=0.
\]
Similarly, for every $\bar{\vz}\in\Ftwo^{LQ}$ and
$\vz=\Phi_{m,S,L}(\bar{\vz})$,
\[
H_X\vz^{\mathsf T}=0
\iff
H_X^{(m,S)}\bar{\vz}^{\mathsf T}=0.
\]
Moreover,
\[
\wt(\Phi_{m,S,L}(\vy))=|S|\,\wt(\vy)
\qquad(\vy\in\Ftwo^{LQ}).
\]
\end{lemma}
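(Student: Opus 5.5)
The plan is to argue directly from the definition $H_Z^{(m,S)}:=H_Z\Phi_{m,S,L}$ and linearity. No descent property of the APMs is needed here, unlike Lemma~\ref{lem:nonlatent-compression-en}. That distinction matters, because $M_{a,b}$ need not preserve $\operatorname{im}(\phi_{m,S})$. The first step is to fix conventions. I regard $\Phi_{m,S,L}$ as an $LP\times LQ$ binary matrix acting on column vectors, so that $\vx^{\mathsf T}=\Phi_{m,S,L}\bar{\vx}^{\mathsf T}$ whenever $\vx=\Phi_{m,S,L}(\bar{\vx})$. This matches the paper's convention that displayed vectors are rows and syndrome equations use transposes.

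With that identification, the syndrome of the lifted vector is
\[
H_Z\vx^{\mathsf T}=H_Z\Phi_{m,S,L}\bar{\vx}^{\mathsf T}=H_Z^{(m,S)}\bar{\vx}^{\mathsf T},
\]
by associativity of matrix multiplication over $\Ftwo$. The two sides are therefore equal as vectors in $\Ftwo^{JP}$, so one vanishes exactly when the other does. The $X$-side statement follows by the same computation with $H_X$ in place of $H_Z$ and $\bar{\vz}$ in place of $\bar{\vx}$. This proves something slightly stronger than the claimed equivalence: the full syndromes coincide, not only their zero sets.

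For the weight identity, I first treat a single block. Write $\vy\in\Ftwo^Q$. By Definition~\ref{def:fiber-quotient-subspace-en}, the coordinates of $\phi_{m,S}(\vy)$ are indexed by pairs $(t,k)\in\Z_Q\times\Z/m\Z$, via the bijection $(t,k)\mapsto t+kQ$ onto $\Z_P$. The coordinate at $(t,k)$ equals $y_t$ if $k\in S$ and $0$ otherwise. The nonzero coordinates are therefore exactly the pairs $(t,k)$ with $y_t=1$ and $k\in S$, and there are $|S|\,\wt(\vy)$ of them. Since $\Phi_{m,S,L}$ is the blockwise direct sum of $L$ copies of $\phi_{m,S}$, the weight is additive over the $L$ blocks. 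Summing gives $\wt(\Phi_{m,S,L}(\vy))=|S|\,\wt(\vy)$ for $\vy\in\Ftwo^{LQ}$.

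I do not expect a genuine obstacle. The only point needing care is to state explicitly that $t+kQ$, for $t\in\Z_Q$ and $k\in\Z/m\Z$, enumerates $\Z_P$ bijectively. This follows from $P=mQ$ and ensures that the $m$ fibers of each block are disjoint and that no coordinate is counted twice in the weight count.
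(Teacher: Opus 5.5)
Your proposal is correct and follows the paper's own proof: the kernel equivalences come directly from $H_Z^{(m,S)}=H_Z\Phi_{m,S,L}$ (you also note that the two syndromes coincide, not just their zero sets), and the weight identity comes from the disjointness of the fibers $\{t+kQ:k\in S\}$ across quotient coordinates and across the $L$ blocks. Your explicit bijection $(t,k)\mapsto t+kQ$ onto $\Z_P$ is exactly the disjointness fact the paper invokes.
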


\begin{proof}
The first two equivalences are just the definitions of
$H_Z^{(m,S)}$ and $H_X^{(m,S)}$ as restrictions of the original active-check
matrices to the image of $\Phi_{m,S,L}$.
For the weight identity, the sets
\[
\{t+kQ:\ k\in S\}\subseteq \Z_P
\]
are disjoint for distinct quotient coordinates $t$, and different block
columns are disjoint as well.  Hence every nonzero quotient coordinate
contributes exactly $|S|$ ones after lifting.
\end{proof}

\begin{proposition}[Fiber-quotient special case]
\label{prop:fiber-quotient-upper-en}
Let $m\mid P$, $Q=P/m$, and $\emptyset\ne S\subseteq\Z/m\Z$.
If
\[
\bar{\vx}\in\Ker(H_Z^{(m,S)})
\quad\text{and}\quad
\vx:=\Phi_{m,S,L}(\bar{\vx})\notin\Row(H_X),
\]
then $\vx$ is a valid $X$ logical representative and
\[
d_X\le \wt(\vx)=|S|\,\wt(\bar{\vx}).
\]
If in addition $\vx\notin\Row(\Htil_X)$, then
$d_X^{(\mathrm{nlat})}\le |S|\,\wt(\bar{\vx})$.

Similarly, if
\[
\bar{\vz}\in\Ker(H_X^{(m,S)})
\quad\text{and}\quad
\vz:=\Phi_{m,S,L}(\bar{\vz})\notin\Row(H_Z),
\]
then
\[
d_Z\le \wt(\vz)=|S|\,\wt(\bar{\vz}),
\]
and the additional condition $\vz\notin\Row(\Htil_Z)$ gives
$d_Z^{(\mathrm{nlat})}\le |S|\,\wt(\bar{\vz})$.
\end{proposition}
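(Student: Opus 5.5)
The plan is to reduce the statement to Proposition~\ref{prop:restricted-lift-upper-en} with the specific choice $\Phi_X=\Phi_Z=\Phi_{m,S,L}$ (so $r_X'=r_Z'=LQ$), and then to compute the weight using Lemma~\ref{lem:fiber-quotient-syndrome-en}. Since $\Phi_{m,S,L}$ is a direct sum of the linear maps $\phi_{m,S}$, it is linear, so that proposition applies verbatim.

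For the $X$ side, I will take $\bar{\vx}\in\Ker(H_Z^{(m,S)})$. By the first equivalence of Lemma~\ref{lem:fiber-quotient-syndrome-en}, this gives $H_Z\vx^{\mathsf T}=0$ for $\vx=\Phi_{m,S,L}(\bar{\vx})$. It is immediate from the definition $H_Z^{(m,S)}=H_Z\Phi_{m,S,L}$, since $H_Z\vx^{\mathsf T}=H_Z\Phi_{m,S,L}\bar{\vx}^{\mathsf T}$ once vectors are read in the column convention of Section~\ref{sec:framework-en}. Together with the hypothesis $\vx\notin\Row(H_X)=C_X^\perp$, Proposition~\ref{prop:restricted-lift-upper-en} yields that $\vx$ is a nontrivial $X$ logical representative and $d_X\le\wt(\vx)$. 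The weight identity $\wt(\Phi_{m,S,L}(\vy))=|S|\wt(\vy)$ from the same lemma then gives $\wt(\vx)=|S|\,\wt(\bar{\vx})$. If additionally $\vx\notin\Row(\Htil_X)$, the second clause of Proposition~\ref{prop:restricted-lift-upper-en} gives the non-latent bound $d_X^{(\mathrm{nlat})}\le|S|\,\wt(\bar{\vx})$.

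The $Z$ side is handled by the identical argument with the roles exchanged. I will use $H_X^{(m,S)}=H_X\Phi_{m,S,L}$, the second equivalence of Lemma~\ref{lem:fiber-quotient-syndrome-en}, and the $Z$-version of Proposition~\ref{prop:restricted-lift-upper-en}.

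I do not expect a genuine obstacle here. The only point needing care is the weight identity, which relies on the supports $\{t+kQ:k\in S\}$ being disjoint for distinct $t$ and across blocks. This is already established in Lemma~\ref{lem:fiber-quotient-syndrome-en}, so I will simply cite it. I will also note that nonemptiness of $S$ is not logically needed for the bound. It only ensures that a nonzero $\bar{\vx}$ lifts to a nonzero $\vx$, which is in any case implied by $\vx\notin\Row(H_X)$.
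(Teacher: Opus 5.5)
Your proposal is correct and follows the paper's proof essentially verbatim: the kernel condition and the weight identity come from Lemma~\ref{lem:fiber-quotient-syndrome-en}, and the logical-representative and non-latent conclusions come from the CSS definition, which you reach through Proposition~\ref{prop:restricted-lift-upper-en} (itself just a restatement of that definition). Your side remark that nonemptiness of $S$ is not logically needed is also right, since $\vx\notin\Row(H_X)$ already forces $\vx\neq 0$.
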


\begin{proof}
We prove the $X$ side.  By Lemma~\ref{lem:fiber-quotient-syndrome-en},
$\bar{\vx}\in\Ker(H_Z^{(m,S)})$ implies
$H_Z\vx^{\mathsf T}=0$, so $\vx\in C_Z$.
If $\vx\notin\Row(H_X)=C_X^\perp$, then $\vx$ is a nontrivial $X$ logical
representative.  Therefore the CSS distance definition gives
$d_X\le\wt(\vx)$, and the weight identity in the lemma gives
$\wt(\vx)=|S|\,\wt(\bar{\vx})$.
The non-latent statement follows from
Definition~\ref{def:latent-nonlatent-distances-en} after adding
$\vx\notin\Row(\Htil_X)$.  The $Z$ side is identical.
\end{proof}

This completes the fiber-quotient special case of the restricted-lift
upper-bound method.  The practical selection of patterns and quotient
vectors is discussed only in Section~\ref{sec:results-en}.

\subsection{CRT-Compression Special Case}
\label{sec:crt-compression-en}
CRT-compression restricts the search space to a stripe subspace coming from CRT
coordinates.  If $P=q_1q_2$ with $\gcd(q_1,q_2)=1$, then
\[
\Z_P \cong \Z_{q_1}\times \Z_{q_2},
\qquad
t\longmapsto (t\bmod q_1,\ t\bmod q_2).
\]

\begin{definition}[CRT stripe subspace]
\label{def:crt-stripe-subspace-en}
Assume that $P=q_1q_2$ and $\gcd(q_1,q_2)=1$.
Define the single-block CRT stripe subspace by
\[
\mathcal{S}^{(1)}_{q_1,q_2}
:=
\operatorname{span}\!\Bigl(
\{\mathbf{1}_{\{x\in\Z_P:\,x\equiv r\!\!\!\pmod{q_1}\}}:0\le r<q_1\}
\cup
\{\mathbf{1}_{\{x\in\Z_P:\,x\equiv s\!\!\!\pmod{q_2}\}}:0\le s<q_2\}
\Bigr)
\subseteq \Ftwo^{P}.
\]
The full-length CRT stripe subspace is then the blockwise direct sum
\[
\mathcal{S}^{\mathrm{crt}}_{q_1,q_2}
:=
\bigoplus_{j=0}^{L-1}\mathcal{S}^{(1)}_{q_1,q_2}
\subseteq \Ftwo^{LP}.
\]
\end{definition}

This definition gives a linear subspace larger than the block-constant subspace
and smaller than the full ambient space.  CRT-compression checks the kernel and
row-space conditions inside this subspace.

\begin{proposition}[CRT-compression special case]
\label{prop:crt-compression-upper-en}
Assume that $P=q_1q_2$ and $\gcd(q_1,q_2)=1$.
Define
\[
u_X^{\mathrm{crt}}(q_1,q_2)
:=
\min\left\{
\wt(\vx):
\vx \in \mathcal{S}^{\mathrm{crt}}_{q_1,q_2}\cap \Ker(H_Z)\setminus \Row(H_X)
\right\},
\]
\[
u_Z^{\mathrm{crt}}(q_1,q_2)
:=
\min\left\{
\wt(\vz):
\vz \in \mathcal{S}^{\mathrm{crt}}_{q_1,q_2}\cap \Ker(H_X)\setminus \Row(H_Z)
\right\},
\]
with the convention that the minimum is $+\infty$ when the corresponding set is empty.
Then $d_X\le u_X^{\mathrm{crt}}(q_1,q_2)$ and
$d_Z\le u_Z^{\mathrm{crt}}(q_1,q_2)$.

In particular, if
$\vx \in \mathcal{S}^{\mathrm{crt}}_{q_1,q_2}\cap \Ker(H_Z)\setminus \Row(H_X)$,
then $d_X\le \wt(\vx)$. If in addition
$\vx\notin \Row(\Htil_X)$, then also
$d_X^{(\mathrm{nlat})}\le \wt(\vx)$.
Likewise, if
$\vz \in \mathcal{S}^{\mathrm{crt}}_{q_1,q_2}\cap \Ker(H_X)\setminus \Row(H_Z)$,
then $d_Z\le \wt(\vz)$. If moreover
$\vz\notin \Row(\Htil_Z)$, then
$d_Z^{(\mathrm{nlat})}\le \wt(\vz)$.
\end{proposition}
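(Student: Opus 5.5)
The plan is to reduce the statement to Proposition~\ref{prop:restricted-lift-upper-en}, or more simply to the CSS distance definition in Definition~\ref{def:latent-nonlatent-distances-en}, by treating the CRT stripe subspace as the image of a linear map. The only structural input is that $\mathcal{S}^{\mathrm{crt}}_{q_1,q_2}$ is a linear subspace of $\Ftwo^{LP}$. This holds because it is defined as a blockwise direct sum of spans. We can therefore fix a basis of it, arranged in a matrix, and let $\Phi_X=\Phi_Z$ be the inclusion map $\Ftwo^{r'}\to\Ftwo^{LP}$ onto this subspace, where $r'=\dim\mathcal{S}^{\mathrm{crt}}_{q_1,q_2}$. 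Coprimality of $q_1,q_2$ plays no role in the validity argument. It only makes the stripes meaningful as CRT coordinates, so I would not use it beyond noting that it is harmless.

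First I would prove the ``in particular'' clauses. Take $\vx\in\mathcal{S}^{\mathrm{crt}}_{q_1,q_2}\cap\Ker(H_Z)\setminus\Row(H_X)$ and write $\vx=\Phi_X(\bar{\vx})$ for some $\bar{\vx}$. Then $H_Z\Phi_X(\bar{\vx})^{\mathsf T}=0$ and $\Phi_X(\bar{\vx})\notin\Row(H_X)$. Proposition~\ref{prop:restricted-lift-upper-en} then gives $d_X\le\wt(\vx)$. Under the extra hypothesis $\vx\notin\Row(\Htil_X)$, the same proposition gives $d_X^{(\mathrm{nlat})}\le\wt(\vx)$.

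Equivalently, one can argue directly: $\vx\in C_Z$ and $\vx\notin C_X^\perp=\Row(H_X)$, so $\vx$ is admissible in the minimum defining $d_X$. Adding $\vx\notin\Row(\Htil_X)$ makes it admissible in the minimum defining $d_X^{(\mathrm{nlat})}$. The $Z$ side is symmetric, with the roles of $H_X,H_Z$ and $\Htil_X,\Htil_Z$ swapped.

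Next I would deduce the bounds on $u^{\mathrm{crt}}$. If the feasible set defining $u_X^{\mathrm{crt}}(q_1,q_2)$ is empty, then $u_X^{\mathrm{crt}}=+\infty$ and the bound $d_X\le u_X^{\mathrm{crt}}$ is trivial. Otherwise the set is finite and nonempty, so the minimum is attained by some $\vx$. The previous step then gives $d_X\le\wt(\vx)=u_X^{\mathrm{crt}}(q_1,q_2)$. Equivalently, the feasible set is contained in $C_Z\setminus C_X^\perp$, and a minimum over a subset is at least the minimum over the whole set. The same argument gives $d_Z\le u_Z^{\mathrm{crt}}(q_1,q_2)$.

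There is no real obstacle here. The only point needing care is the identification $\Row(H_X)=C_X^\perp$ (and likewise on the $Z$ side), which holds because $C_X=\Ker(H_X)$ and the dual of a kernel is the row space. The other point is the $+\infty$ convention for empty feasible sets. Both are routine.
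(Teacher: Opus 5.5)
Your proposal is correct and follows essentially the same route as the paper. Every vector in the CRT feasible set lies in $C_Z\setminus C_X^\perp$, so it bounds $d_X$; the empty-set convention handles the trivial case; and the extra condition $\vx\notin\Row(\Htil_X)$ gives the non-latent bound directly from Definition~\ref{def:latent-nonlatent-distances-en}. Your optional detour through Proposition~\ref{prop:restricted-lift-upper-en}, using the inclusion map onto a basis of $\mathcal{S}^{\mathrm{crt}}_{q_1,q_2}$, is harmless but unnecessary, since the paper argues directly from the CSS definition.
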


\begin{proof}
We treat the $X$ side. Set
\[
\mathcal{W}_X(q_1,q_2)
:=
\mathcal{S}^{\mathrm{crt}}_{q_1,q_2}\cap \Ker(H_Z)\setminus \Row(H_X).
\]
Every vector $\vx\in\mathcal{W}_X(q_1,q_2)$ is, by construction, a valid
$X$-type non-stabilizer logical representative. Therefore the CSS definition
implies $d_X\le \wt(\vx)$ for every $\vx\in\mathcal{W}_X(q_1,q_2)$.
If the set is nonempty, taking the minimum over it yields
\[
d_X\le \min_{\vx\in\mathcal{W}_X(q_1,q_2)}\wt(\vx)
=u_X^{\mathrm{crt}}(q_1,q_2).
\]
If the set is empty, the convention $u_X^{\mathrm{crt}}(q_1,q_2)=+\infty$
makes the inequality trivial.

If in addition $\vx\notin\Row(\Htil_X)$, then
$\vx\in C_Z\setminus C_X^\perp$ and $\vx\notin \Row(\Htil_X)$,
so Definition~\ref{def:latent-nonlatent-distances-en} gives
$d_X^{(\mathrm{nlat})}\le \wt(\vx)$.
The $Z$ side is identical.
\end{proof}

This completes the CRT-compression special case of the restricted-lift
upper-bound method.

\section{Additional Witness-Based Upper Bounds}
\label{sec:local-decoding-en}

This section treats bounds that do not use a restricted lift.  The search
procedure may generate candidate vectors or supports in any way.  The
upper-bound statement starts only after the resulting vector is checked to lie
in the appropriate parity-check kernel and outside the stabilizer row space.

\subsection{The Direct CSS-Search Upper Bound}
\label{sec:direct-css-search-en}
The direct CSS-search bound records the smallest weight among a finite set of
already verified CSS logical vectors.

\begin{proposition}[Direct CSS-search upper bound]
\label{prop:direct-css-upper-en}
Let
\[
\mathcal{W}_X\subseteq \Ker(H_Z)\setminus \Row(H_X),\qquad
\mathcal{W}_Z\subseteq \Ker(H_X)\setminus \Row(H_Z)
\]
be any finite candidate sets.
If they are nonempty, define
\[
u_X^{\mathrm{dir}}:=\min_{\vx\in\mathcal{W}_X}\wt(\vx),\qquad
u_Z^{\mathrm{dir}}:=\min_{\vz\in\mathcal{W}_Z}\wt(\vz).
\]
Then $d_X\le u_X^{\mathrm{dir}}$ and $d_Z\le u_Z^{\mathrm{dir}}$.

In particular, if $\vx\in\Ker(H_Z)\setminus \Row(H_X)$, then
$d_X\le \wt(\vx)$. Likewise, if
$\vz\in\Ker(H_X)\setminus \Row(H_Z)$, then
$d_Z\le \wt(\vz)$.
Moreover, if some $\vx\in\mathcal{W}_X$ also satisfies
$\vx\notin \Row(\Htil_X)$, then
$d_X^{(\mathrm{nlat})}\le \wt(\vx)$, and analogously
$\vz\in\mathcal{W}_Z$ with $\vz\notin \Row(\Htil_Z)$ implies
$d_Z^{(\mathrm{nlat})}\le \wt(\vz)$.
\end{proposition}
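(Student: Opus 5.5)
The plan is to reduce everything to the single-vector statement and then take minima, exactly as in the proof of Proposition~\ref{prop:crt-compression-upper-en}. First I treat the ``in particular'' clause on the $X$ side. Take $\vx\in\Ker(H_Z)\setminus\Row(H_X)$. By Definition~\ref{def:latent-nonlatent-distances-en}, $C_Z=\Ker(H_Z)$. The identification $C_X^\perp=\Ker(H_X)^\perp=\Row(H_X)$ is standard linear algebra over $\Ftwo$, since $\Ker(A)$ is the orthogonal complement of $\Row(A)$ and taking the complement twice returns $\Row(A)$ in finite dimension. Together these give $\vx\in C_Z\setminus C_X^\perp$. So $\vx$ belongs to the set whose minimum weight defines $d_X$, and hence $d_X\le\wt(\vx)$. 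The $Z$ side is the same with the roles of $H_X$ and $H_Z$ exchanged.

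Second, for the finite-set statement I apply the single-vector bound to every element of $\mathcal{W}_X$. This gives $d_X\le\wt(\vx)$ for all $\vx\in\mathcal{W}_X$. Because $\mathcal{W}_X$ is finite and nonempty, the minimum $u_X^{\mathrm{dir}}$ is attained by some element, so $d_X\le u_X^{\mathrm{dir}}$. The $Z$ side is identical.

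Third, for the non-latent refinement, suppose $\vx\in\mathcal{W}_X$ also satisfies $\vx\notin\Row(\Htil_X)$. Then $\vx$ meets all three conditions in the definition of $d_X^{(\mathrm{nlat})}$: it lies in $C_Z$, it lies outside $C_X^\perp$, and it lies outside $\Row(\Htil_X)$. Hence $d_X^{(\mathrm{nlat})}\le\wt(\vx)$, and likewise on the $Z$ side.

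There is no genuine obstacle here. The only point that needs care is the identification $C_X^\perp=\Row(H_X)$, which I will cite as the standard kernel/row-space duality. The paper already uses it implicitly in Proposition~\ref{prop:restricted-lift-upper-en}.
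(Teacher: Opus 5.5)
Your proposal is correct and follows essentially the paper's argument: membership in $\Ker(H_Z)\setminus\Row(H_X)=C_Z\setminus C_X^\perp$ gives $d_X\le\wt(\vx)$ for each candidate, taking the minimum over the finite nonempty set gives $d_X\le u_X^{\mathrm{dir}}$, and the non-latent clause follows from Definition~\ref{def:latent-nonlatent-distances-en}. The only differences are cosmetic: you prove the single-vector case first and extend it to $\mathcal{W}_X$, whereas the paper proves the set version and obtains the single-vector case as $\mathcal{W}_X=\{\vx\}$; you also explicitly justify $C_X^\perp=\Row(H_X)$, which the paper takes for granted.
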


\begin{proof}
We treat the $X$ side. Every vector in $\mathcal{W}_X$ belongs to
$\Ker(H_Z)\setminus \Row(H_X)$, so every such vector is already an $X$-type
non-stabilizer logical representative. Therefore
$d_X\le \wt(\vx)$ for every $\vx\in\mathcal{W}_X$.
If $\mathcal{W}_X$ is nonempty, taking the minimum over the sampled candidate
set gives
\[
d_X\le \min_{\vx\in\mathcal{W}_X}\wt(\vx)=u_X^{\mathrm{dir}}.
\]
The one-vector statement is the special case $\mathcal{W}_X=\{\vx\}$.

If moreover $\vx\notin\Row(\Htil_X)$, then
$\vx\in C_Z\setminus C_X^\perp$ and $\vx\notin \Row(\Htil_X)$,
so Definition~\ref{def:latent-nonlatent-distances-en} yields
$d_X^{(\mathrm{nlat})}\le \wt(\vx)$.
The $Z$ side is identical.
\end{proof}

The proposition \markededit{follows directly} from the CSS
definition.  Any procedure may produce the candidate sets
$\mathcal W_X,\mathcal W_Z$, but the upper-bound claim begins only after the
CSS membership and non-stabilizer conditions have been \markededit{checked}.

\subsection{The Cycle-8 ETS Upper Bound}
\label{sec:cycle8-ets-en}
All codes considered here have girth $8$, so cycle-8 structures provide the
local supports \markededit{used below}.  We use trapping set terminology
following~\cite{KarimiBanihashemi2012,HashemiBanihashemi2016}, only to \markededit{identify}
support vectors in the parity-check kernel.

\begin{definition}[Cycle-8-connected ETS]
\label{def:cycle8-ets-en}
Let $H\in\Ftwo^{m\times n}$ and let $G(H)$ be the corresponding Tanner graph. For a variable-node set $S\subseteq V(H)$, define
\[
\Gamma(S):=\{c\in C(H):N(c)\cap S\neq\emptyset\},
\qquad
d_S(c):=|N(c)\cap S|.
\]
Define the odd-check boundary by
\[
\partial S:=\{c\in\Gamma(S):d_S(c)\equiv 1\pmod 2\}.
\]
If every $c\in\Gamma(S)$ satisfies
\[
d_S(c)\in\{1,2\}
\]
and the induced subgraph on $S\cup\Gamma(S)$ is connected, then $S$ is an ETS of type $(a,b)$, where
\[
a:=|S|,\qquad b:=|\partial S|.
\]
If, in addition, $S$ can be written as a union of simple $8$-cycles such that each new cycle shares at least one variable node or check node with the previous union, then $S$ is called a \emph{cycle-8-connected ETS}.
\end{definition}

Figure~\ref{fig:cycle8-connected-ets-en} recalls representative
cycle-8-connected ETS patterns from~\cite{Kasai2026}.  These patterns are
examples rather than part of the definition: the definition only requires that
the support be grown by adding one 8-cycle at a time while maintaining overlap
with the union already constructed.

\begin{figure}[t]
\centering
\begin{minipage}{0.48\linewidth}
\centering
\begin{tikzpicture}[
  x=0.6cm,y=0.6cm,
  var/.style={circle,draw=black,fill=white,line width=0.6pt,minimum size=5.5pt,inner sep=0pt},
  chk/.style={rectangle,draw=black,fill=gray!20,line width=0.6pt,minimum size=5.2pt,inner sep=0pt,rounded corners=1.2pt},
  odd/.style={rectangle,draw=black,fill=gray!50,line width=0.6pt,minimum size=5.2pt,inner sep=0pt,rounded corners=1.2pt},
  edge/.style={line width=0.6pt,draw=black,line cap=round,line join=round}
]
\node[var] (v1) at (0,0) {};
\node[var] (v2) at (0,2) {};
\node[var] (v3) at (2,2) {};
\node[var] (v4) at (2,0) {};
\node[var] (v5) at (4,2) {};
\node[var] (v6) at (4,0) {};

\node[chk] (c1) at (1,2) {};
\node[chk] (c2) at (2,1) {};
\node[chk] (c3) at (1,0) {};
\node[chk] (c4) at (0,1) {};
\node[chk] (c5) at (3,2) {};
\node[chk] (c6) at (4,1) {};
\node[chk] (c7) at (3,0) {};
\node[chk] (c8) at (1.0,1.0) {};

\node[odd] (o1) at (-0.8,0) {};
\node[odd] (o2) at (4.8,2) {};

\draw[edge] (v2)--(c1)--(v3)--(c2)--(v4)--(c3)--(v1)--(c4)--(v2);
\draw[edge] (v3)--(c5)--(v5)--(c6)--(v6)--(c7)--(v4)--(c2);
\draw[edge] (v1)--(o1);
\draw[edge] (v5)--(o2);
\draw[edge] (v2) to[bend right=40] (c8.west);
\draw[edge] (c8.east) to[bend left=50] (v6);
\end{tikzpicture}
\par
$(6,2)$ ETS (two connected 8-cycles)
\end{minipage}
\hfill
\begin{minipage}{0.48\linewidth}
\centering
\begin{tikzpicture}[
  x=0.5cm,y=0.6cm,
  var/.style={circle,draw=black,fill=white,line width=0.6pt,minimum size=5.2pt,inner sep=0pt},
  chk/.style={rectangle,draw=black,fill=gray!20,line width=0.6pt,minimum size=5.0pt,inner sep=0pt,rounded corners=1.2pt},
  odd/.style={rectangle,draw=black,fill=gray!50,line width=0.6pt,minimum size=5.0pt,inner sep=0pt,rounded corners=1.2pt},
  edge/.style={line width=0.6pt,draw=black,line cap=round,line join=round}
]
\node[var] (v1) at (0,2) {};
\node[var] (v2) at (0,0) {};
\node[var] (v3) at (2,2) {};
\node[var] (v4) at (2,0) {};
\node[var] (v5) at (4,2) {};
\node[var] (v6) at (4,0) {};
\node[var] (v7) at (6,2) {};
\node[var] (v8) at (6,0) {};
\node[var] (v9) at (8,2) {};
\node[var] (v10) at (8,0) {};
\node[var] (v11) at (10,2) {};
\node[var] (v12) at (10,0) {};

\node[chk] (u0) at (0,1) {};
\node[chk] (u1) at (2,1) {};
\node[chk] (u2) at (4,1) {};
\node[chk] (u3) at (6,1) {};
\node[chk] (u4) at (8,1) {};
\node[chk] (u5) at (10,1) {};

\node[chk] (t1) at (1,2) {};
\node[chk] (t2) at (3,2) {};
\node[chk] (t3) at (5,2) {};
\node[chk] (t4) at (7,2) {};
\node[chk] (t5) at (9,2) {};
\node[chk] (b1) at (1,0) {};
\node[chk] (b2) at (3,0) {};
\node[chk] (b3) at (5,0) {};
\node[chk] (b4) at (7,0) {};
\node[chk] (b5) at (9,0) {};
\node[chk] (u6) at (5,1.0) {};

\node[odd] (o3) at (-1,0) {};
\node[odd] (o4) at (11,2) {};

\draw[edge] (v1)--(t1)--(v3)--(u1)--(v4)--(b1)--(v2)--(u0)--(v1);
\draw[edge] (v3)--(t2)--(v5)--(u2)--(v6)--(b2)--(v4)--(u1)--(v3);
\draw[edge] (v5)--(t3)--(v7)--(u3)--(v8)--(b3)--(v6)--(u2)--(v5);
\draw[edge] (v7)--(t4)--(v9)--(u4)--(v10)--(b4)--(v8)--(u3)--(v7);
\draw[edge] (v9)--(t5)--(v11)--(u5)--(v12)--(b5)--(v10)--(u4)--(v9);
\draw[edge] (v2)--(o3);
\draw[edge] (v11)--(o4);
\draw[edge] (v1) to[bend right=40] (u6.west);
\draw[edge] (u6.east) to[bend left=40] (v12);
\end{tikzpicture}
\par
$(12,2)$ ETS (five connected 8-cycles)
\end{minipage}
\par\medskip
\begin{minipage}{0.70\linewidth}
\centering
\begin{tikzpicture}[
  x=0.6cm,y=0.6cm,
  var/.style={circle,draw=black,fill=white,line width=0.6pt,minimum size=5.4pt,inner sep=0pt},
  chk/.style={rectangle,draw=black,fill=gray!20,line width=0.6pt,minimum size=5.1pt,inner sep=0pt,rounded corners=1.2pt},
  odd/.style={rectangle,draw=black,fill=gray!50,line width=0.6pt,minimum size=5.1pt,inner sep=0pt,rounded corners=1.2pt},
  edge/.style={line width=0.6pt,draw=black,line cap=round,line join=round}
]
\node[var] (v1) at (0,0) {};
\node[var] (v2) at (0,2) {};
\node[var] (v3) at (2,2) {};
\node[var] (v4) at (2,0) {};
\node[var] (v5) at (4,2) {};
\node[var] (v6) at (4,0) {};

\node[chk] (c1) at (1,2) {};
\node[chk] (c2) at (2,1) {};
\node[chk] (c3) at (1,0) {};
\node[chk] (c4) at (0,1) {};
\node[chk] (c5) at (3,2) {};
\node[chk] (c6) at (4,1) {};
\node[chk] (c7) at (3,0) {};
\node[chk] (c8) at (1.0,1.0) {};

\node[chk] (c599) at (-0.8,0.8) {};
\node[chk] (c542) at (4.8,0.8) {};

\node[var] (v8) at (-1.8,2.2) {};
\node[chk] (c893) at (1.6,3.2) {};
\node[var] (v7) at (5.0,2.2) {};

\node[odd] (o1715) at (-2.8,3.6) {};
\node[odd] (o1991) at (6.8,3.6) {};

\draw[edge] (v2)--(c1)--(v3)--(c2)--(v4)--(c3)--(v1)--(c4)--(v2);
\draw[edge] (v3)--(c5)--(v5)--(c6)--(v6)--(c7)--(v4)--(c2);
\draw[edge] (v1)--(c599);
\draw[edge] (v5)--(c542);
\draw[edge] (v2) to[bend right=40] (c8.west);
\draw[edge] (c8.east) to[bend left=40] (v6);

\draw[edge] (c542)--(v7)--(c893)--(v8)--(c599);
\draw[edge] (v8)--(o1715);
\draw[edge] (v7)--(o1991);
\end{tikzpicture}
\par
$(8,2)$ ETS (path-4 attachment)
\end{minipage}
\caption{Representative cycle-8-connected ETS patterns used in the library
of~\cite{Kasai2026}.  White circles are variable nodes, light squares are even
checks, and dark squares are odd checks.}
\label{fig:cycle8-connected-ets-en}
\end{figure}
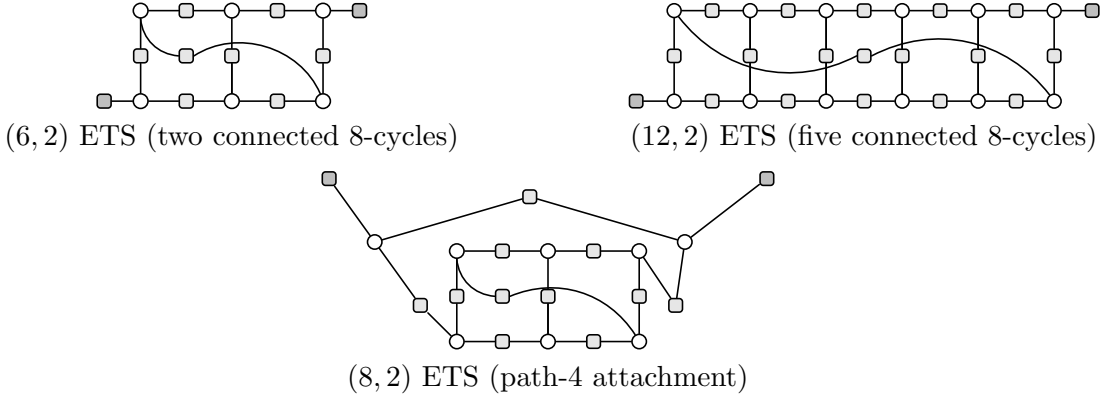

For girth $8$, cycle-based local support candidates start from $8$-cycles.  The
paper uses this class only to state the
parity-check-kernel and distance-upper-bound implications below.

\begin{lemma}[The odd-check boundary equals the induced check vector]
\label{lem:cycle8-boundary-syndrome-en}
For any $S\subseteq V(H)$, let $\mathbf{1}_S\in\Ftwo^n$ be its indicator vector. Then
\[
H\,\mathbf{1}_S^{\mathsf T}=\mathbf{1}_{\partial S}^{\mathsf T}.
\]
In particular, if $S$ is an ETS of type $(a,b)$, then the induced check vector has weight exactly $b$.
\end{lemma}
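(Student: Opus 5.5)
The plan is to compute the syndrome $H\,\mathbf{1}_S^{\mathsf T}$ one check at a time and then read off its weight from the definitions in Definition~\ref{def:cycle8-ets-en}.

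First, fix a check node $c\in C(H)$, so $c$ is a row of $H$. Over $\Ftwo$, the $c$-th entry of $H\,\mathbf{1}_S^{\mathsf T}$ is
\[
\sum_{v\in S}H_{c,v}=|N(c)\cap S| \bmod 2=d_S(c)\bmod 2,
\]
because $H_{c,v}=1$ exactly when $v\in N(c)$ in the Tanner graph $G(H)$.

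Next, split the check nodes into two cases.
\begin{itemize}
\item If $c\notin\Gamma(S)$, then $N(c)\cap S=\emptyset$. So $d_S(c)=0$ and the entry is $0$. Such a $c$ is also not in $\partial S\subseteq\Gamma(S)$.
\item If $c\in\Gamma(S)$, the entry is $1$ exactly when $d_S(c)$ is odd, that is, exactly when $c\in\partial S$.
\end{itemize}
Together, the two cases say that $H\,\mathbf{1}_S^{\mathsf T}$ and $\mathbf{1}_{\partial S}^{\mathsf T}$ agree in every coordinate, which gives the identity.

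For the ETS statement, note that the weight of $\mathbf{1}_{\partial S}$ is $|\partial S|$. By the type $(a,b)$ in Definition~\ref{def:cycle8-ets-en}, $b=|\partial S|$, so the induced check vector has weight exactly $b$. The ETS condition $d_S(c)\in\{1,2\}$ is not needed here. It only says that the odd checks are precisely those with $d_S(c)=1$.

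No step is a real obstacle. The only care needed is the orientation convention (row indicator versus column syndrome) and the fact that $\partial S$ lies inside $\Gamma(S)$, so checks outside $\Gamma(S)$ contribute nothing.
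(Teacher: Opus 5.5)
Your proof is correct and follows the paper's argument: each coordinate of $H\,\mathbf{1}_S^{\mathsf T}$ is the parity of $d_S(c)$, and this parity is $1$ exactly when $c\in\partial S$. You also handle checks outside $\Gamma(S)$ explicitly and spell out the weight claim $\wt(\mathbf{1}_{\partial S})=|\partial S|=b$, which are details the paper leaves implicit.
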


\begin{proof}
Each component of $H\mathbf{1}_S^{\mathsf T}$ is the parity of the number of neighbors of the corresponding check node inside $S$, namely the parity of $d_S(c)$. This is $1$ exactly when $c\in\partial S$.
\end{proof}

This lemma identifies the graph-theoretic boundary with a parity-check product.
The two propositions below use only the cases needed for the upper
bound: $b=0$ gives a support vector in $\Ker(H)$, and two $b=2$ ETSs with the
same boundary give a kernel vector by a single symmetric difference.

\begin{proposition}[Classical upper bounds from cycle-8-connected ETSs with $b=0$]
\label{prop:cycle8-b0-upper-en}
If $S$ is a cycle-8-connected ETS of type $(a,0)$, then
$\mathbf{1}_S\in\Ker(H)$,
hence the corresponding classical code $C(H)=\Ker(H)$ satisfies $d(C(H))\le a$.
\end{proposition}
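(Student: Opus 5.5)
The plan is to apply Lemma~\ref{lem:cycle8-boundary-syndrome-en} directly. That lemma gives $H\mathbf{1}_S^{\mathsf T}=\mathbf{1}_{\partial S}^{\mathsf T}$ for every $S\subseteq V(H)$. In particular it holds for the given cycle-8-connected ETS $S$. Since $S$ has type $(a,0)$, the definition of the type gives $b=|\partial S|=0$. Hence $\partial S=\emptyset$, so $\mathbf{1}_{\partial S}=0$ and therefore $H\mathbf{1}_S^{\mathsf T}=0$, that is, $\mathbf{1}_S\in\Ker(H)$.

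The second step is to confirm that $\mathbf{1}_S$ is a nonzero codeword of weight exactly $a$. By definition $a=|S|$ and $\wt(\mathbf{1}_S)=|S|$, so I only need $S\neq\emptyset$. This follows from the cycle-8-connected hypothesis. $S$ is a union of simple $8$-cycles, and the definition starts from at least one such cycle. Each simple $8$-cycle in a Tanner graph alternates between variable and check nodes, so it contains four variable nodes. Therefore $a\ge 4>0$. Then $\mathbf{1}_S$ is a nonzero element of $C(H)=\Ker(H)$, and the minimum distance, being the minimum weight of a nonzero codeword of a linear code, satisfies $d(C(H))\le\wt(\mathbf{1}_S)=a$.

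No step is a genuine obstacle. The only point needing care is nonemptiness of $S$, since the empty set would trivially have empty boundary. I would settle it from the union-of-$8$-cycles clause as above, instead of relying on the degree condition $d_S(c)\in\{1,2\}$ or the connectivity condition. Those two conditions play no role in the kernel membership itself: the argument uses only $b=0$ through the boundary--syndrome lemma.
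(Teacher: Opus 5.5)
Your proposal is correct and follows the paper's proof, which simply applies Lemma~\ref{lem:cycle8-boundary-syndrome-en} with $\partial S=\emptyset$ to get $H\mathbf{1}_S^{\mathsf T}=0$. Your additional check that $S\neq\emptyset$, via the four variable nodes of a simple $8$-cycle, supplies the nonzero-codeword detail that the paper's one-line proof leaves implicit for the bound $d(C(H))\le a$.
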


\begin{proof}
By Lemma~\ref{lem:cycle8-boundary-syndrome-en},
$\partial S=\emptyset$ implies $H\,\mathbf{1}_S^{\mathsf T}=0$.
\end{proof}

\begin{proposition}[Classical upper bounds from pairs of $(a,2)$ ETSs with the same boundary]
\label{prop:cycle8-b2pair-upper-en}
Let $S_1,S_2$ be cycle-8-connected ETSs of types $(a_1,2)$ and $(a_2,2)$, respectively. If
$\partial S_1=\partial S_2$ and $S_1\triangle S_2\neq\emptyset$,
then
$\mathbf{1}_{S_1}+\mathbf{1}_{S_2}=\mathbf{1}_{S_1\triangle S_2}\in \Ker(H)$,
and therefore $d(C(H))\le |S_1\triangle S_2|\le a_1+a_2$.
\end{proposition}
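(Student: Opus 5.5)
The plan is to apply Lemma~\ref{lem:cycle8-boundary-syndrome-en} to each of $S_1$ and $S_2$ and use linearity of $H$ over $\Ftwo$. First, since each $S_i$ is an ETS of type $(a_i,2)$, the lemma gives
\[
H\,\mathbf{1}_{S_1}^{\mathsf T}=\mathbf{1}_{\partial S_1}^{\mathsf T},
\qquad
H\,\mathbf{1}_{S_2}^{\mathsf T}=\mathbf{1}_{\partial S_2}^{\mathsf T}.
\]
Adding these identities over $\Ftwo$ and using $\partial S_1=\partial S_2$, I obtain $H(\mathbf{1}_{S_1}+\mathbf{1}_{S_2})^{\mathsf T}=\mathbf{1}_{\partial S_1}^{\mathsf T}+\mathbf{1}_{\partial S_1}^{\mathsf T}=0$. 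This puts the sum in $\Ker(H)$.

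Second, I will identify the sum with an indicator vector. Coordinatewise, $(\mathbf{1}_{S_1}+\mathbf{1}_{S_2})_v=1$ exactly when $v$ lies in exactly one of $S_1,S_2$, so $\mathbf{1}_{S_1}+\mathbf{1}_{S_2}=\mathbf{1}_{S_1\triangle S_2}$. The hypothesis $S_1\triangle S_2\neq\emptyset$ makes this a nonzero codeword of $C(H)=\Ker(H)$. Its weight is $|S_1\triangle S_2|$, which gives $d(C(H))\le|S_1\triangle S_2|$.

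Finally, the inequality $|S_1\triangle S_2|=|S_1|+|S_2|-2|S_1\cap S_2|\le a_1+a_2$ follows from $a_i=|S_i|$. None of the steps is a real obstacle; the only point to keep explicit is the nonemptiness of $S_1\triangle S_2$. Without it the codeword would be zero and would not bound the minimum distance. The cycle-8-connectedness and the condition $d_S(c)\in\{1,2\}$ are not actually needed beyond what Lemma~\ref{lem:cycle8-boundary-syndrome-en} already gives for arbitrary $S$.
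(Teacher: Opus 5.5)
Your proposal is correct and follows essentially the same route as the paper's proof: apply Lemma~\ref{lem:cycle8-boundary-syndrome-en} to each $S_i$, add over $\Ftwo$ using $\partial S_1=\partial S_2$, identify the sum with $\mathbf{1}_{S_1\triangle S_2}$, and use $S_1\triangle S_2\neq\emptyset$ for nonzeroness. Your explicit count $|S_1\triangle S_2|=a_1+a_2-2|S_1\cap S_2|$ fills in the last inequality, which the paper leaves implicit.
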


\begin{proof}
Lemma~\ref{lem:cycle8-boundary-syndrome-en} gives
$H\,\mathbf{1}_{S_i}^{\mathsf T}=\mathbf{1}_{\partial S_i}^{\mathsf T}$ for $i=1,2$.
Since $\partial S_1=\partial S_2$, adding the two equations over $\Ftwo$ yields
$H(\mathbf{1}_{S_1}+\mathbf{1}_{S_2})^{\mathsf T}=0$.
The sum of indicator vectors equals the indicator vector of the symmetric difference. The
assumption $S_1\triangle S_2\neq\emptyset$ guarantees that this codeword is nonzero.
\end{proof}

To use the following corollary, one converts a cycle-8 ETS \markededit{support} into a
support vector, checks membership in the relevant parity-check kernel and
non-membership in the opposite row space, and uses the support size as the
upper bound.

\begin{corollary}[CSS upper bounds from cycle-8-connected ETS witnesses]
\label{cor:cycle8-css-upper-en}
Let $H=H_Z$, and let $\vx\in\Ker(H_Z)$ be a nonzero support vector obtained from Proposition~\ref{prop:cycle8-b0-upper-en} or Proposition~\ref{prop:cycle8-b2pair-upper-en}. If $\vx\notin \Row(H_X)$, then $\vx$ is a valid $X$ logical representative and $d_X\le \wt(\vx)$.
Similarly, if $\vz\in\Ker(H_X)\setminus \Row(H_Z)$ is a nonzero support vector obtained on the $Z$ side, then $d_Z\le \wt(\vz)$.
\end{corollary}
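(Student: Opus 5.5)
The plan is to reduce the corollary to the CSS distance definition in Definition~\ref{def:latent-nonlatent-distances-en}, in the same way as Proposition~\ref{prop:direct-css-upper-en}. The ETS machinery only serves to certify the kernel condition. Non-membership in the opposite row space is a separate hypothesis and is not derived from the graph structure.

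First I fix the $X$ side with $H=H_Z$, so that the Tanner graph $G(H)$ is that of $H_Z$. I then split into the two admissible sources of $\vx$.
\begin{itemize}
\item If $\vx=\mathbf{1}_S$ for a cycle-8-connected ETS $S$ of type $(a,0)$, Proposition~\ref{prop:cycle8-b0-upper-en} gives $H_Z\vx^{\mathsf T}=0$. Here $\vx\neq 0$ because $S$ is a union of $8$-cycles and hence nonempty.
\item If $\vx=\mathbf{1}_{S_1\triangle S_2}$ for two $(a_i,2)$ ETSs with $\partial S_1=\partial S_2$ and $S_1\triangle S_2\neq\emptyset$, Proposition~\ref{prop:cycle8-b2pair-upper-en} gives $\vx\in\Ker(H_Z)$ and $\vx\neq 0$.
\end{itemize}
In either case $\vx\in C_Z=\Ker(H_Z)$.

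Next, I use $\Row(H_X)=C_X^\perp$, which holds by definition of the dual over $\Ftwo$. The hypothesis $\vx\notin\Row(H_X)$ then places $\vx$ in $C_Z\setminus C_X^\perp$. The definition of $d_X$ as the minimum weight over this set immediately gives $d_X\le\wt(\vx)$. Since $\vx$ is an indicator vector, this weight equals the support size, namely $a$ or $|S_1\triangle S_2|\le a_1+a_2$.

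The $Z$ side is symmetric. I take $H=H_X$ and obtain $\vz\in\Ker(H_X)=C_X$ from the same two propositions applied to $G(H_X)$. Then $\vz\notin\Row(H_Z)=C_Z^\perp$ yields $d_Z\le\wt(\vz)$.

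There is no real obstacle here. The only point needing care is that the ETS propositions are stated for a generic $H$ with classical distance conclusions. One must therefore instantiate $H$ correctly: $H_Z$ for $X$-type witnesses and $H_X$ for $Z$-type witnesses. One must also note that the classical kernel membership is exactly the CSS constituent-code membership. The nontriviality assumption ($\vx\notin\Row(H_X)$) is not implied by the ETS structure, so it has to remain an explicit hypothesis, which the statement already does.
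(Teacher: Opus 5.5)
Your proposal is correct and follows the paper's own argument. The ETS propositions only supply $\vx\in\Ker(H_Z)=C_Z$ (resp.\ $\vz\in\Ker(H_X)=C_X$), and the bound then follows at once from $\Row(H_X)=C_X^\perp$ (resp.\ $\Row(H_Z)=C_Z^\perp$) together with the CSS distance definition. One small wording point: $\Row(H_X)=C_X^\perp$ is not literally a definition; it is the standard double-dual identity $\Ker(H_X)^\perp=\Row(H_X)$ over $\Ftwo$, which follows from a dimension count.
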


\begin{proof}
This is immediate from the CSS distance definitions together with $\Row(H_X)=C_X^\perp$ and $\Row(H_Z)=C_Z^\perp$.
\end{proof}

\subsection{The Decoder-Failure Upper Bound}
The decoder-failure bound \markededit{uses} a residual produced by a failed
decoding event.  For a true error $\mathbf e$ and a syndrome-matched estimate
$\hat{\mathbf e}$, set
\[
\mathbf{\Delta}=\mathbf{e}+\hat{\mathbf{e}}.
\]
If this residual is pure $X$ or pure $Z$ and is outside the corresponding
stabilizer row space, its weight is an upper bound.

\begin{proposition}
\label{prop:decoder-failure-upper-en}
Let $\mathbf{e}$ be the true error and $\hat{\mathbf{e}}$ an estimated error
with the same observed syndrome, and set the residual to
$\mathbf{\Delta}=\mathbf{e}+\hat{\mathbf{e}}$.
If $\mathbf{\Delta}$ is pure $X$-type and $\mathbf{\Delta}\notin\Row(H_X)$,
then $d_X\le \wt(\mathbf{\Delta})$.
Likewise, if $\mathbf{\Delta}$ is pure $Z$-type and
$\mathbf{\Delta}\notin\Row(H_Z)$, then $d_Z\le \wt(\mathbf{\Delta})$.
\end{proposition}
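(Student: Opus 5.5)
The plan is to reduce the statement to the CSS distance definition in Definition~\ref{def:latent-nonlatent-distances-en}, in the same way as Proposition~\ref{prop:direct-css-upper-en}. The only new ingredient is to show that the residual $\mathbf{\Delta}$ lies in the appropriate parity-check kernel. I treat the pure $X$-type case; the $Z$ case follows by interchanging $X$ and $Z$.

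First I will make the syndrome convention explicit. In the CSS setting, an error $\mathbf{e}$ is written as a pair $(\mathbf{e}_X,\mathbf{e}_Z)$ of binary vectors of length $n$. The syndrome of the $X$-part is $H_Z\mathbf{e}_X^{\mathsf T}$, and the syndrome of the $Z$-part is $H_X\mathbf{e}_Z^{\mathsf T}$. This matches the convention that $X$ logical representatives live in $C_Z=\Ker(H_Z)$. The hypothesis that $\hat{\mathbf{e}}$ has the same observed syndrome as $\mathbf{e}$ then gives
\[
H_Z\mathbf{e}_X^{\mathsf T}=H_Z\hat{\mathbf{e}}_X^{\mathsf T}
\quad\text{and}\quad
H_X\mathbf{e}_Z^{\mathsf T}=H_X\hat{\mathbf{e}}_Z^{\mathsf T}.
\]
Adding these over $\Ftwo$ shows that the $X$-part of $\mathbf{\Delta}$ lies in $\Ker(H_Z)$ and the $Z$-part lies in $\Ker(H_X)$. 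This holds by linearity of the syndrome map alone, so it needs no assumption on the decoder.

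Next I use the pure $X$-type hypothesis: the $Z$-part of $\mathbf{\Delta}$ is zero, so $\mathbf{\Delta}$ is identified with its $X$-part, which lies in $C_Z$. The hypothesis $\mathbf{\Delta}\notin\Row(H_X)=C_X^\perp$ then places $\mathbf{\Delta}$ in $C_Z\setminus C_X^\perp$. Note that this already forces $\mathbf{\Delta}\neq 0$. By the definition of $d_X$, we get $d_X\le\wt(\mathbf{\Delta})$. Equivalently, this is Proposition~\ref{prop:direct-css-upper-en} applied with $\mathcal W_X=\{\mathbf{\Delta}\}$.

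The only point needing care, and the closest thing to an obstacle, is fixing which check matrix measures which error type. If that convention were reversed, the kernel membership would land in the wrong code. I would therefore state the convention explicitly, consistent with $X$ logicals lying in $\Ker(H_Z)$ as in Corollary~\ref{cor:cycle8-css-upper-en}. I would also remark that the converse direction is irrelevant here: a residual inside $\Row(H_X)$ is a stabilizer (a degenerate success) and yields no bound, which is why the row-space exclusion test is required.
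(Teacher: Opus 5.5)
Your proof is correct and follows the paper's route: equal syndromes give $\mathbf{\Delta}$ zero syndrome, the pure-type hypothesis places it in $\Ker(H_Z)$ (respectively $\Ker(H_X)$), and the row-space exclusion makes it a nontrivial logical representative, which bounds $d_X$ (respectively $d_Z$) by its weight. Your explicit statement of which check matrix measures which error component only spells out a convention the paper leaves implicit.
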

\begin{proof}
Since $\mathbf{e}$ and $\hat{\mathbf{e}}$ produce the same syndrome, the
residual $\mathbf{\Delta}$ has zero check syndrome.
Hence, if it is pure $X$-type then $\mathbf{\Delta}\in\Ker(H_Z)$, and if it is
pure $Z$-type then $\mathbf{\Delta}\in\Ker(H_X)$.
By assumption it lies outside the corresponding stabilizer row space, so it is
a nontrivial logical representative and its weight upper-bounds the
corresponding logical distance.
\end{proof}

In the present paper, Proposition~\ref{prop:decoder-failure-upper-en} is used
only to state the implication from a \markededit{checked decoder residual} to a distance
upper bound.

\section{Use of the Upper-Bound Methods}
\label{sec:results-en}
This section applies the upper-bound criteria to representative codes.
Detailed numerical tables and parameter rows are maintained on the supplementary
website~\cite{KasaiWeb2026}.  The website is updated separately from this
manuscript and records additional candidates and verification status; the main
text fixes several worked examples and records the $P\le 768$ rows used here
from the cited online table.
Table~\ref{tab:p768-apm-en} gives their affine maps in this order:
$C_1$ and $C_2$ are $P=216$ codes used for ETS and decoder-failure examples,
$C_3,\ldots,C_8$ are the online-table codes with $240\le P\le 576$,
$C_9$ is the printed $P=768$ code of~\cite{Kasai2026}, and
$C_{10}$ is a $P=768$ reference code.
Table~\ref{tab:p768-representative-en} records the corresponding upper-bound values.
The rows are ordered by increasing code length, equivalently by increasing $P$.
For \dets, a finite table entry \(w^{[s]}\) means that the reported ETS check
found a witness of weight \(w\) using candidate unions generated from at most
\(s\) connected \(8\)-cycles.

\begin{table}[t]
\centering
\caption{Affine data for the representative codes. Other code lists are available in~\cite{KasaiWeb2026}.}
\label{tab:p768-apm-en}
\scriptsize
\setlength{\tabcolsep}{2pt}
\renewcommand{\arraystretch}{1.08}
\begin{tabular}{ccc*{6}{>{\centering\arraybackslash}p{5.4em}}}
\toprule
Code & $P$ & map & $0$ & $1$ & $2$ & $3$ & $4$ & $5$ \\
\midrule
\multirow{2}{*}{$C_1$}
& \multirow{2}{*}{216}
& $f_i$ & \affexpr{73}{87} & \affexpr{199}{75} & \affexpr{181}{24} & \affexpr{127}{213} & \affexpr{199}{75} & \affexpr{181}{186} \\
& & $g_i$ & \affexpr{97}{188} & \affexpr{73}{204} & \affexpr{25}{140} & \affexpr{13}{154} & \affexpr{73}{168} & \affexpr{25}{68} \\
\midrule
\multirow{2}{*}{$C_2$}
& \multirow{2}{*}{216}
& $f_i$ & \affexpr{181}{105} & \affexpr{91}{57} & \affexpr{37}{192} & \affexpr{181}{186} & \affexpr{91}{57} & \affexpr{37}{156} \\
& & $g_i$ & \affexpr{193}{112} & \affexpr{97}{116} & \affexpr{193}{64} & \affexpr{181}{54} & \affexpr{97}{44} & \affexpr{193}{100} \\
\midrule
\multirow{2}{*}{$C_3$}
& \multirow{2}{*}{240}
& $f_i$ & \affexpr{1}{54} & \affexpr{31}{141} & \affexpr{1}{0} & \affexpr{31}{153} & \affexpr{151}{201} & \affexpr{1}{96} \\
& & $g_i$ & \affexpr{161}{112} & \affexpr{121}{148} & \affexpr{121}{180} & \affexpr{221}{162} & \affexpr{121}{164} & \affexpr{41}{44} \\
\midrule
\multirow{2}{*}{$C_4$}
& \multirow{2}{*}{264}
& $f_i$ & \affexpr{67}{126} & \affexpr{1}{177} & \affexpr{1}{126} & \affexpr{133}{198} & \affexpr{1}{261} & \affexpr{1}{144} \\
& & $g_i$ & \affexpr{89}{136} & \affexpr{1}{248} & \affexpr{89}{48} & \affexpr{89}{86} & \affexpr{1}{32} & \affexpr{1}{124} \\
\midrule
\multirow{2}{*}{$C_5$}
& \multirow{2}{*}{288}
& $f_i$ & \affexpr{73}{98} & \affexpr{241}{244} & \affexpr{241}{36} & \affexpr{1}{192} & \affexpr{1}{64} & \affexpr{97}{72} \\
& & $g_i$ & \affexpr{109}{75} & \affexpr{217}{186} & \affexpr{1}{120} & \affexpr{217}{0} & \affexpr{217}{126} & \affexpr{37}{141} \\
\midrule
\multirow{2}{*}{$C_6$}
& \multirow{2}{*}{384}
& $f_i$ & \affexpr{181}{282} & \affexpr{145}{8} & \affexpr{233}{180} & \affexpr{169}{84} & \affexpr{337}{296} & \affexpr{89}{108} \\
& & $g_i$ & \affexpr{127}{63} & \affexpr{157}{366} & \affexpr{277}{138} & \affexpr{163}{129} & \affexpr{13}{294} & \affexpr{229}{114} \\
\midrule
\multirow{2}{*}{$C_7$}
& \multirow{2}{*}{384}
& $f_i$ & \affexpr{1}{66} & \affexpr{169}{357} & \affexpr{121}{183} & \affexpr{49}{54} & \affexpr{241}{102} & \affexpr{121}{63} \\
& & $g_i$ & \affexpr{65}{296} & \affexpr{1}{96} & \affexpr{257}{0} & \affexpr{305}{70} & \affexpr{193}{120} & \affexpr{65}{216} \\
\midrule
\multirow{2}{*}{$C_8$}
& \multirow{2}{*}{576}
& $f_i$ & \affexpr{1}{90} & \affexpr{289}{408} & \affexpr{325}{441} & \affexpr{1}{168} & \affexpr{397}{195} & \affexpr{433}{540} \\
& & $g_i$ & \affexpr{1}{48} & \affexpr{289}{392} & \affexpr{97}{72} & \affexpr{409}{134} & \affexpr{97}{328} & \affexpr{385}{272} \\
\midrule
\multirow{2}{*}{$C_9$}
& \multirow{2}{*}{768}
& $f_i$ & \affexpr{763}{435} & \affexpr{679}{69}  & \affexpr{397}{330} & \affexpr{61}{18}   & \affexpr{697}{612} & \affexpr{373}{246} \\
& & $g_i$ & \affexpr{289}{496} & \affexpr{257}{640} & \affexpr{625}{200} & \affexpr{41}{524}  & \affexpr{193}{672} & \affexpr{449}{672} \\
\midrule
\multirow{2}{*}{$C_{10}$}
& \multirow{2}{*}{768}
& $f_i$ & \affexpr{557}{626} & \affexpr{161}{624} & \affexpr{385}{704} & \affexpr{737}{592} & \affexpr{721}{760} & \affexpr{49}{264}  \\
& & $g_i$ & \affexpr{571}{639} & \affexpr{55}{681}  & \affexpr{229}{294} & \affexpr{307}{579} & \affexpr{637}{234} & \affexpr{121}{660} \\
\bottomrule
\end{tabular}
\end{table}

In the examples below, support sets are sometimes written with two-dimensional
coordinates.  The convention is as follows.  If a vector in $\Ftwo^{LB}$ is
viewed as $L$ consecutive blocks of length $B$, then the pair $(c,t)$ denotes
the one-dimensional coordinate $cB+t$.  Thus $B=P$ for vectors in the original
length $LP$, while $B=Q$ for quotient vectors of length $LQ$.

\begin{example}[Checking the $X$-side entry $\dlat_X=48$ for $C_{10}$]
The $X$-side entry $\dlat_X=48$ for $C_{10}$ is \markededit{obtained from} a single latent
vector.  The computation is carried out in the latent row space: first compute
the kernel of the latent compatibility map, then inspect its low-weight
block-periodic elements, and finally map each surviving latent vector through
$\Htil_X$ and apply the row-space exclusion test.  More explicitly, in the
present $(J,L)=(3,12)$ situation the latent condition is diagonal, so the
search may be carried out one latent row block at a time.  We look for sparse
vectors $u\in\Ker(\Psi_3^{\mathsf T})$ and then evaluate the actual full-length
weight of $\vlambda^{\mathsf T}\Htil_X$.  The search tests periodic supports
\[
u_t=e_t+e_{t+192}+e_{t+384}+e_{t+576}\qquad(0\le t<192),
\]
by checking whether $\Psi_3^{\mathsf T}u_t=0$ and then computing the full-length
image weight after placing $u_t$ in each latent row block.  In this search, the following kernel
vector is found.  Let $e_t$ be the standard basis vector of $\Ftwo^{768}$, put
\[
u=e_0+e_{192}+e_{384}+e_{576},
\]
and place this vector in the first latent row block:
\[
\vlambda=(u,0,0)\in(\Ftwo^{768})^3.
\]
For the affine data of $C_{10}$, one checks that
$\Psi_3^{\mathsf T}u=0$, hence
$\vx:=\vlambda^{\mathsf T}\Htil_X$ satisfies $H_Z\vx^{\mathsf T}=0$.
The support of this vector is
\[
\operatorname{supp}(\vx)
=
\{(c,\alpha_c+192q):0\le c\le 11,\ 0\le q\le 3\},
\]
where
\[
(\alpha_0,\ldots,\alpha_{11})
=(16,184,72,50,48,128,3,42,84,63,105,102).
\]
Thus $\wt(\vx)=12\cdot 4=48$.  Finally, row reduction over $\Ftwo$ gives
\[
\rank(H_X)=2302,\qquad
\rank\!\begin{bmatrix}H_X\\ \vx\end{bmatrix}=2303,
\]
so $\vx\notin\Row(H_X)$.  Proposition~\ref{prop:latent-upper-en} therefore
\markededit{gives} the latent upper bound $d_X^{(\mathrm{lat})}\le 48$, which is the
reported $X$-side \dlat entry for $C_{10}$.
The same number appears on the $Z$ side in Table~\ref{tab:p768-representative-en}
because, for the standard $(J,L)=(3,12)$ APM design used here, active
orthogonality kills all residual blocks except the single interaction
$\Psi_3$.  The $X$-side latent condition uses $\Psi_3^{\mathsf T}$, while the
$Z$-side latent condition uses $\Psi_3$.  Thus the two latent searches are
transpose versions of the same residual-interaction test.  This symmetry is a
feature of this APM family and active/latent split, not a general property of
arbitrary CSS codes.
\end{example}

\begin{example}[Checking the $X$-side full-fiber restricted-lift value \texorpdfstring{\(\dblk_X=32\)}{dbar-blk-X=32} for \(C_1\)]
For $C_9$, take the block factor $m=4$ and write $Q=768/4=192$.
The search is done after imposing the $4$-block-constant ansatz.  This replaces
the original parity-check kernel equation by the smaller quotient equation
$\bar H_Z^{(4)}\bar{\vx}^{\mathsf T}=0$ over length $12Q$.  One then searches
the quotient kernel for a low-weight vector whose lift is outside both
$\Row(H_X)$ and the latent row space.  Concretely, this low-weight search is
done in the shorter quotient coordinate set: for a trial support
$I\subseteq\{0,\ldots,12Q-1\}$, solve
\[
\bar H_Z^{(4)}[I]\bar{\vy}^{\mathsf T}=0
\qquad\text{over }\Ftwo,
\]
where $\bar H_Z^{(4)}[I]$ denotes the submatrix with columns in $I$.  A nonzero
solution gives a quotient vector supported in $I$, and the block-constant lift
then repeats each nonzero quotient coordinate in four positions of the original
length.  On the quotient side there is an
$X$-side vector
$\bar{\vx}\in\Ker(\bar H_Z^{(4)})$ with
\[
\operatorname{supp}(\bar{\vx})
=
\{(0,97),(0,117),(1,145),(2,29),(4,77),(4,141),(5,5),(5,125)\}.
\]
Its block-constant lift has support
\[
\operatorname{supp}(\vx)
=
\{(c,t+192q):(c,t)\in\operatorname{supp}(\bar{\vx}),\ 0\le q\le 3\},
\]
and therefore $\wt(\vx)=4\cdot 8=32$.
The row-space tests give
\[
\vx\notin\Row(H_X),
\qquad
\vx\notin\Row(\Htil_X).
\]
Proposition~\ref{prop:nonlatent-upper-en} therefore gives $d_X\le 32$,
showing the full-fiber block-compression special case of the restricted-lift
method.  This is the $X$-side method-specific full-fiber value $\dblk_X$ for
$C_9$.  Since the selected-fiber value below is smaller, Table~\ref{tab:p768-representative-en}
keeps the restricted-lift components separated.
\end{example}

\begin{example}[Checking the $X$-side selected-fiber restricted-lift value \texorpdfstring{\(\dfib_X=24\)}{dbar-fib-X=24} for \(C_9\)]
For the same code $C_9$, use the fiber-quotient lift with
\[
m=4,\qquad Q=192,\qquad S=\{0,2\}.
\]
The search method is the same quotient-kernel search as above, but the lift is
restricted to the two selected fibers in $S$ rather than to all four fibers.
Equivalently, one forms the restricted fiber-quotient check matrix and
searches its kernel for a low-weight quotient vector whose lifted vector is not
in $\Row(H_X)$.  If $T_S:\Ftwo^{12Q}\to\Ftwo^{12P}$ denotes this restricted
lift, the equation being solved is
\[
H_Z\,T_S(\bar{\vx})^{\mathsf T}=0.
\]
The low-weight search is again performed on quotient supports: choose a trial
support $I$ in the $12Q$ quotient coordinates, solve the corresponding
restricted linear system for $(H_Z\circ T_S)[I]$, lift the resulting quotient
vector only along the fibers in $S$, and then apply the row-space test.
The restricted quotient problem contains an $X$-side vector
$\bar{\vx}$ of weight $12$ in the relevant quotient kernel.  Lifting it only
along the two fibers in $S$ gives a vector $\vx$ satisfying
$H_Z\vx^{\mathsf T}=0$ and $\vx\notin\Row(H_X)$, with
\[
\wt(\vx)=|S|\wt(\bar{\vx})=2\cdot 12=24.
\]
Thus Proposition~\ref{prop:fiber-quotient-upper-en} gives $d_X\le 24$.
Among the displayed $X$-side restricted-lift entries for $C_9$, \markededit{this}
weight is the smallest, so it is recorded in the $\dfib_X$ column.
\end{example}

\begin{example}[Checking the $X$-side CRT-stripe restricted-lift value \texorpdfstring{\(\dcrt_X=96\)}{dbar-crt-X=96} for \(C_9\)]
The CRT-compression special case for $C_9$ uses the factorization
\[
768=3\cdot 256.
\]
Here $\mathcal{S}^{\mathrm{crt}}_{3,256}$ is the full-length CRT stripe subspace
from Definition~\ref{def:crt-stripe-subspace-en}.  Since this code has
12 column blocks, a vector in this subspace is parametrized, block by block, by
coefficients
\[
\alpha_{c,r}\quad (0\le c<12,\ 0\le r<3),
\qquad
\beta_{c,s}\quad (0\le c<12,\ 0\le s<256).
\]
For the coordinate $(c,t)$, this parametrization means
\[
\vx_{cP+t}
=
\sum_{r=0}^{2}\alpha_{c,r}\mathbf{1}_{t\equiv r\!\!\!\pmod 3}
+
\sum_{s=0}^{255}\beta_{c,s}\mathbf{1}_{t\equiv s\!\!\!\pmod {256}}
\quad\text{in }\Ftwo.
\]
Thus choosing the coefficients is exactly the same as choosing a vector
$\vx\in\mathcal{S}^{\mathrm{crt}}_{3,256}$.
If $T:\Ftwo^{12(3+256)}\to\Ftwo^{12P}$ denotes this linear parametrization,
then restricting the parity-check kernel equation to the CRT stripe subspace
means solving
\[
H_Z\,T(\vy)^{\mathsf T}=0
\]
over $\Ftwo$, or equivalently finding the kernel of the restricted check map
$H_Z\circ T$.  After computing a basis of this restricted kernel, the
low-weight part is found in coefficient space: sparse linear combinations of
kernel basis vectors are formed, mapped by $T$, and scored by their full
Hamming weight in length $12P$.  This final scoring is necessary because
different CRT stripes may overlap and cancel over $\Ftwo$.  The surviving
low-weight images are then tested against $\Row(H_X)$.  In this way one obtains
an $X$-side vector
\[
\vx\in \mathcal{S}^{\mathrm{crt}}_{3,256}\cap\Ker(H_Z)
\]
with $\wt(\vx)=96$ and $\vx\notin\Row(H_X)$.  With the coordinate convention
above, the first few support coordinates are
\[
(1,13),(1,29),(1,45),\ldots,
\]
with the remaining coordinates obtained from the same CRT stripe pattern.
Proposition~\ref{prop:crt-compression-upper-en} therefore \markededit{gives}
$d_X\le 96$.  This verifies the $X$-side CRT-stripe entry $\dcrt_X=96$ for
$C_9$.
\end{example}

\begin{example}[Reading the restricted-lift entries]
Table~\ref{tab:p768-representative-en} does not collapse the restricted-lift
value into a single \drls entry.  For example, on the $X$ side of $C_9$ one has
\[
\dblk_X=32,\qquad \dfib_X=24,\qquad \dcrt_X=96.
\]
These are three separate displayed \markededit{entries}, not one optimized
restricted-lift parameter.  In this row, the selected-fiber \markededit{entry} has the
smallest displayed weight among the three restricted-lift entries.  The $Z$ side
is recorded analogously through
$\dblk_Z$, $\dfib_Z$, and $\dcrt_Z$ whenever witnesses are available.
\end{example}

\begin{example}[Checking the $X$-side entry $\ddir_X=74$ for $C_{10}$]
For $C_{10}$, the direct CSS-search method does not prescribe a quotient,
fiber, or CRT subspace.  It searches directly in the CSS kernel
$\Ker(H_Z)$, using the row-space test only as the final exclusion of stabilizer
vectors.  Concretely, the candidate set is generated by repeated
restricted-support kernel searches.  For each trial support
$I\subseteq\{0,\ldots,12P-1\}$, let $H_Z[I]$ be the submatrix of $H_Z$
consisting of the columns indexed by $I$.  One solves
\[
H_Z[I]\vy^{\mathsf T}=0
\qquad\text{over }\Ftwo.
\]
A nonzero solution $\vy$ is then extended by zero outside $I$, giving
$\vx=\iota_I(\vy)\in\Ker(H_Z)$.  The search repeats this step for many
trial supports, enumerates low-weight vectors in the restricted kernels, and
keeps only those passing the final test $\vx\notin\Row(H_X)$.  Since the trial
supports are only a search device, not part of the \markededit{proof}, the final
reported vector is always rechecked against the full matrix $H_Z$ and the full
row space $\Row(H_X)$.
This search finds an $X$-side vector $\vx$
with
\[
H_Z\vx^{\mathsf T}=0,\qquad
\vx\notin\Row(H_X),\qquad
\wt(\vx)=74.
\]
Equivalently, this is a nontrivial $X$-logical operator of weight $74$ found
without imposing any quotient or compression pattern.  Hence
Proposition~\ref{prop:direct-css-upper-en} gives $d_X\le 74$.
This is the $X$-side \ddir entry for $C_{10}$.
\end{example}

\begin{example}[Checking the $X$-side \texorpdfstring{\(\dets_X=10\)}{dbar-ets-X=10} entry]
\label{ex:ub4-small-en}
The cycle-8 ETS upper bound for $C_1$ is verified by the following support.  Let
$C_1$ be the $P=216$ APM-CSS code listed in Table~\ref{tab:p768-apm-en}.
This code has parameters
\[
n=2592,\qquad \rank(H_X)=646,\qquad \rank(H_Z)=646,\qquad k=1300,
\]
and both Tanner graphs have girth $8$.

The cycle-8 ETS search is performed directly in the Tanner graph of $H_Z$.
Starting from individual 8-cycles, one forms connected unions of such cycles
and keeps the supports whose odd-check boundary is empty.  In this code the
following support is found:
\[
\begin{aligned}
S=\{&(0,0),(1,110),(2,76),(3,72),(4,192),(5,40),\\
    &(7,195),(8,90),(10,123),(11,180)\}.
\end{aligned}
\]
Here the coordinate convention is the same as above, with block length
$B=P=216$.  The support $S$ is a cycle-8-connected ETS of type $(10,0)$; it is
obtained as a connected union of four 8-cycles.  Let $\vx$ be its incidence
vector.  Since the boundary is empty,
\[
H_Z\vx^{\mathsf T}=0,\qquad \wt(\vx)=10.
\]
The final step is the stabilizer exclusion test.  Row reduction over $\Ftwo$
gives
\[
\rank(H_X)=646,\qquad
\rank\!\begin{bmatrix}H_X\\ \vx\end{bmatrix}=647,
\]
so $\vx\notin\Row(H_X)$.  Hence $\vx$ is a nontrivial $X$-logical operator, and
Corollary~\ref{cor:cycle8-css-upper-en} gives
\[
d_X\le 10.
\]
This \markededit{gives} the $X$-side value $\dets_X=10$ for $C_1$.  For the other representative codes in
Table~\ref{tab:p768-representative-en}, the same cycle-8-connected ETS
construction was also checked in the reported search range, but it did not
produce a CSS codeword passing the final row-space exclusion test.  Those rows
therefore have no reported \dets witness in the table.  This entry records only
the \markededit{support above} and does not assert that the same ETS construction
produces codewords for the other displayed codes.
\end{example}

\begin{example}[Checking the $Z$-side entry \texorpdfstring{\(\ddec_Z=10\)}{dbar-dec-Z=10} for \(C_1\)]
For the same code $C_1$, the decoder-failure upper bound can be checked by a
pure residual witness obtained from a failed decoding instance.  The search
procedure is to simulate independent Pauli noise and retain only residuals
satisfying the three mathematical conditions in
Proposition~\ref{prop:decoder-failure-upper-en}: the two errors have the same
syndrome, the residual is pure $X$ or pure $Z$, and the residual is not
stabilizer equivalent to zero.

At physical error rate $p=0.03$, this procedure yields a pure $Z$-type
residual $\vz$ of weight $10$.  The value of \(p\) is only a parameter used to
generate this candidate; the \markededit{check} below uses only the residual itself.
With the coordinate convention used above, its support is
\[
\begin{aligned}
\operatorname{supp}(\vz)=\{&(0,204),(1,26),(2,136),(3,60),(4,180),\\
&(5,100),(7,183),(8,78),(10,111),(11,168)\}.
\end{aligned}
\]
Equivalently, if $\bm{\Delta}$ denotes the full Pauli residual
$\mathbf e+\hat{\mathbf e}$, then its $Z$ component $\vz$ has weight $10$ and
its $X$ component is zero.  Since the two errors have the same syndrome,
$\vz\in\Ker(H_X)$; since the residual is not stabilizer equivalent to zero,
$\vz\notin\Row(H_Z)$.
Proposition~\ref{prop:decoder-failure-upper-en} therefore gives
\[
d_Z\le 10.
\]
This is the $Z$-side \ddec entry for $C_1$.  The entry \markededit{reports} only this
residual: it has zero check syndrome on the relevant side, is pure $Z$-type,
and is outside $\Row(H_Z)$.  No statement is made about how frequently such
residuals occur in decoding simulations.
\end{example}

Table~\ref{tab:p768-representative-en} summarizes the worked examples and the
$P\le 768$ rows displayed in the online table.  The Parameters column
incorporates the smallest displayed upper-bound value for each code, in the
form $[[n,k,d\leq \bar d]]$.  The table separates the $X$ and $Z$ sides and
also separates the restricted-lift components \dblk, \dfib, and \dcrt.
Highlighted entries attain the smallest upper bound reported for that code.
The symbol ``--'' is reserved for the \dets and \ddec columns: it means that no
cycle-8 ETS witness or \markededit{checked decoder residual}, respectively, is
reported for that side.

\begin{table}[t]
\centering
\caption{Examples and online-table rows. Other code lists are available in~\cite{KasaiWeb2026}.}
\label{tab:p768-representative-en}
\scriptsize
\setlength{\tabcolsep}{2pt}
\renewcommand{\arraystretch}{1.08}
\begin{adjustbox}{max width=\textwidth}
\begin{tabular}{cc>{\centering\arraybackslash}p{16.0em}c*{7}{>{\centering\arraybackslash}p{2.2em}}}
\toprule
Code & $P$ & Parameters & side & \dlat & \dblk & \dfib & \dcrt & \ddir & \dets & \ddec \\
\midrule
\multirow{2}{*}{$C_1$} & \multirow{2}{*}{216} & \multirow{2}{=}{$[[n=2592, k=1300, d\leq 10]]$} & $X$ & 24 & 24 & 24 & 54 & 54 & \ubhit{\ensuremath{10^{[4]}}} & -- \\
& & & $Z$ & 24 & 24 & 24 & 64 & 88 & -- & \ubhit{10} \\
\midrule
\multirow{2}{*}{$C_2$} & \multirow{2}{*}{216} & \multirow{2}{=}{$[[n=2592, k=1300, d\leq 14]]$} & $X$ & 24 & 42 & 32 & 54 & 54 & -- & -- \\
& & & $Z$ & 24 & 36 & 18 & 64 & 88 & -- & \ubhit{14} \\
\midrule
\multirow{2}{*}{$C_3$} & \multirow{2}{*}{240} & \multirow{2}{=}{$[[n=2880, k=1444, d\leq 24]]$} & $X$ & \ubhit{24} & 40 & 40 & 40 & 40 & -- & -- \\
& & & $Z$ & \ubhit{24} & 40 & \ubhit{24} & 80 & 92 & -- & \ubhit{24} \\
\midrule
\multirow{2}{*}{$C_4$} & \multirow{2}{*}{264} & \multirow{2}{=}{$[[n=3168, k=1588, d\leq 22]]$} & $X$ & 24 & 44 & 48 & 44 & 44 & -- & -- \\
& & & $Z$ & 24 & 44 & 44 & 44 & 94 & -- & \ubhit{22} \\
\midrule
\multirow{2}{*}{$C_5$} & \multirow{2}{*}{288} & \multirow{2}{=}{$[[n=3456, k=1732, d\leq 24]]$} & $X$ & \ubhit{24} & \ubhit{24} & \ubhit{24} & 64 & 32 & -- & 28 \\
& & & $Z$ & \ubhit{24} & 32 & \ubhit{24} & 64 & 32 & -- & -- \\
\midrule
\multirow{2}{*}{$C_6$} & \multirow{2}{*}{384} & \multirow{2}{=}{$[[n=4608, k=2308, d\leq 24]]$} & $X$ & \ubhit{24} & 64 & 48 & 204 & 128 & -- & 28 \\
& & & $Z$ & \ubhit{24} & 48 & 48 & 204 & 236 & -- & -- \\
\midrule
\multirow{2}{*}{$C_7$} & \multirow{2}{*}{384} & \multirow{2}{=}{$[[n=4608, k=2308, d\leq 48]]$} & $X$ & \ubhit{48} & \ubhit{48} & 64 & 54 & 64 & -- & -- \\
& & & $Z$ & \ubhit{48} & 96 & 64 & 256 & 128 & -- & -- \\
\midrule
\multirow{2}{*}{$C_8$} & \multirow{2}{*}{576} & \multirow{2}{=}{$[[n=6912, k=3460, d\leq 32]]$} & $X$ & 48 & 64 & 64 & 72 & 72 & -- & -- \\
& & & $Z$ & 48 & 64 & \ubhit{32} & 384 & 254 & -- & -- \\
\midrule
\multirow{2}{*}{$C_9$} & \multirow{2}{*}{768} & \multirow{2}{=}{$[[n=9216, k=4612, d\leq 24]]$} & $X$ & 48 & 32 & \ubhit{24} & 96 & 128 & -- & -- \\
& & & $Z$ & 48 & 64 & 64 & 512 & 256 & -- & -- \\
\midrule
\multirow{2}{*}{$C_{10}$} & \multirow{2}{*}{768} & \multirow{2}{=}{$[[n=9216, k=4612, d\leq 48]]$} & $X$ & \ubhit{48} & 108 & 64 & 222 & 74 & -- & -- \\
& & & $Z$ & \ubhit{48} & 128 & 72 & 486 & 256 & -- & -- \\
\bottomrule
\end{tabular}
\end{adjustbox}
\end{table}

The lift sizes discussed here are non-prime-power moduli because of the
commutation constraint, not because of an arbitrary selection rule.  As explained in the companion
commutation-pattern note~\cite{KasaiCommPattern2026}, prime-power moduli are
incompatible with the required affine commutation pattern, whereas CRT-split
moduli realize that pattern.

\section{Discussion and Outlook}
\markededit{This paper organizes upper-bound witnesses, not a general lower bound}
for the full CSS distance.  Appendix~A gives exact certification only for the
latent part.  Existing quantum-LDPC lower-bound
techniques~\cite{TillichZemor2014,PanteleevKalachev2022,LeverrierZemor2022,DinurHsiehLinVidick2023} and classical
minimum-weight algorithms~\cite{Leon1988,Stern1989,CanteautChabaud1998,Grassl2006,LisonekTrummer2016}
do not directly match the active/latent decomposition and stabilizer quotient
used here.

\markededit{The current strategy is therefore to find smaller
upper-bound witnesses whenever possible.}  For each code, several independent
witness classes are checked, and the smallest \markededit{value passing the stated tests} is recorded.  The supplementary website is a live
table of the current best values~\cite{KasaiWeb2026}; it is still being updated
as new searches finish.  This use of a separately updated parameter table is
similar in role to the classical and quantum code tables maintained
in~\cite{GrasslCodeTables}.

\markededit{The data currently available suggest that the best recorded upper bound grows}
roughly linearly with blocklength over the explored range.  This observation
should be read cautiously.  The computational cost also grows with the
blocklength, so larger instances receive fewer effective samples.  For distances
below about \(30\), belief-propagation (BP) decoding experiments
in related finite-length quantum-LDPC studies~\cite{KasaiISIT2025,KasaiDegeneracy2025,KomotoKasaiSharpBP2025} often return
low-weight logical residuals directly.  Beyond that scale, BP residual witnesses
become much harder to obtain.  \markededit{The fact that} the large-scale decoding
experiments have not produced failures corresponding to very small logical
errors \markededit{is only negative evidence, not a proof that the true minimum distance is large}.  In
particular, the true minimum distance of this family could still plateau around
weight \(30\).

\appendix

\section{Exact Latent Lower Bounds via Block-Constant Compression}
\label{app:latent-lower-en}
\dlat becomes exact when block-constant structure gives a matching lower bound.
We first recall the two maps used throughout this appendix:
\[
\pi_{m,L}:U_m(LP)\to\Ftwo^{LQ},\qquad
\iota_{m,L}:\Ftwo^{LQ}\to U_m(LP).
\]
For the principal family $(J,L)=(3,12)$, these maps are the compression and lift
between $U_m(12P)$ and $\Ftwo^{12Q}$.  The theorem below states that, under a
block-constant kernel hypothesis, every latent candidate descends to the
compressed image.

\begin{theorem}[Exact latent lower bounds under block-constant kernel hypotheses]
\label{thm:latent-lower-en}
Let $s:=L/2-J$ and assume $P=mQ$. Suppose
\[
\Ker(H_Z\Htil_X^{\mathsf T})\subseteq U_m(sP),
\qquad
\Ker(H_X\Htil_Z^{\mathsf T})\subseteq U_m(sP).
\]
In other words, every latent coefficient vector lying in the kernel of either
active--latent mixed product is blockwise $m$-block constant. Under these
hypotheses, the proof below shows that the following compressed latent images
are well defined:
\[
\bar{\mathcal L}_X:=\pi_{m,L}(C_Z\cap \Row(\Htil_X)),\qquad
\bar{\mathcal L}_Z:=\pi_{m,L}(C_X\cap \Row(\Htil_Z)).
\]
If every nonzero vector in $\bar{\mathcal L}_X$ has weight at least $\tau_X$, and every nonzero vector in $\bar{\mathcal L}_Z$ has weight at least $\tau_Z$, then
\[
d_X^{(\mathrm{lat})}\ge m\tau_X,\qquad
d_Z^{(\mathrm{lat})}\ge m\tau_Z.
\]
\end{theorem}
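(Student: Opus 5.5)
The plan is to show that every element of $C_Z\cap\Row(\Htil_X)$ lies in $U_m(LP)$. Once this holds, $\pi_{m,L}$ is defined on it and $\bar{\mathcal L}_X$ makes sense. The weight identity $\wt(\vx)=m\,\wt(\pi_{m,L}(\vx))$ on $U_m(LP)$ then transfers the quotient bound $\tau_X$ to the latent distance. The $Z$ side is identical with $X$ and $Z$ interchanged, using the second kernel hypothesis.

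\emph{Step 1 (descent of latent vectors).} Take $\vx\in C_Z\cap\Row(\Htil_X)$. As in the converse direction of the proof of Proposition~\ref{prop:latent-upper-en}, write $\vx=\vlambda^{\mathsf T}\Htil_X$ with $\vlambda\in(\Ftwo^P)^{s}$. Membership in $C_Z$ gives $H_Z\Htil_X^{\mathsf T}\vlambda=H_Z\vx^{\mathsf T}=0$. By hypothesis, $\vlambda\in U_m(sP)$.

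Next, $\vx^{\mathsf T}=\Htil_X^{\mathsf T}\vlambda$. The matrix $\Htil_X^{\mathsf T}$ is a block matrix whose blocks are transposes of APMs. The transpose of a permutation matrix is its inverse, and the inverse of $x\mapsto ax+b$ is again affine modulo $P$, with a unit multiplier. Hence Lemma~\ref{lem:apm-descent-en}, applied blockwise, shows that each block maps $U_m(P)$ into $U_m(P)$. Sums of block-constant vectors are block constant, so $\vx\in U_m(LP)$.

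The point needing some care is that the representation $\vlambda$ need not be unique. This is harmless: any representing $\vlambda$ lies in the kernel, so each one is block constant, and the image $\vx$ does not depend on the choice. So $\pi_{m,L}$ is well defined on $C_Z\cap\Row(\Htil_X)$, and $\bar{\mathcal L}_X$ is well defined. It is also a linear subspace, being the linear image of a subspace.

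\emph{Step 2 (weight transfer).} Let $\vx\in(C_Z\cap\Row(\Htil_X))\setminus C_X^\perp$. Then $\vx\neq0$. Since $\iota_{m,L}\circ\pi_{m,L}$ is the identity on $U_m(LP)$, $\pi_{m,L}$ is injective there, so $\bar{\vx}:=\pi_{m,L}(\vx)$ is a nonzero element of $\bar{\mathcal L}_X$. By assumption $\wt(\bar{\vx})\ge\tau_X$, and the lift weight identity stated with $\iota_{m,N}$ gives $\wt(\vx)=m\,\wt(\bar{\vx})\ge m\tau_X$.

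Taking the minimum in Definition~\ref{def:latent-nonlatent-distances-en} yields $d_X^{(\mathrm{lat})}\ge m\tau_X$. If the feasible set is empty, the convention $+\infty$ makes the inequality trivial.

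The only step I expect to require care is Step 1: it must use transposed (inverse) APM blocks rather than the APMs themselves, and it must handle the non-uniqueness of $\vlambda$. Both are handled as above. The rest is bookkeeping that mirrors Lemma~\ref{lem:nonlatent-compression-en}.
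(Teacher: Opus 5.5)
Your proposal is correct and follows the paper's proof essentially step by step. Both arguments write $\vx=\vlambda^{\mathsf T}\Htil_X$, use $\vx\in C_Z$ to place $\vlambda$ in $\Ker(H_Z\Htil_X^{\mathsf T})\subseteq U_m(sP)$, invoke Lemma~\ref{lem:apm-descent-en} to get $\vx\in U_m(LP)$, and conclude with $\wt(\vx)=m\,\wt(\pi_{m,L}(\vx))\ge m\tau_X$. Your remarks that the blocks of $\Htil_X^{\mathsf T}$ are inverse affine permutations, and that the choice of $\vlambda$ is irrelevant, spell out details the paper leaves implicit.
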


\begin{proof}
It suffices to prove the $X$ case. Let $\vx\in C_Z\cap \Row(\Htil_X)$. Then
\[
\vx=\vlambda^{\mathsf T}\Htil_X,\qquad
\vlambda\in\Ftwo^{sP}.
\]
Equivalently, $\vx^{\mathsf T}=\Htil_X^{\mathsf T}\vlambda$.
Since $\vx\in C_Z$,
\[
0=H_Z\vx^{\mathsf T}
=H_Z\Htil_X^{\mathsf T}\vlambda,
\]
so
\[
\vlambda\in\Ker(H_Z\Htil_X^{\mathsf T})\subseteq U_m(sP).
\]
Lemma~\ref{lem:apm-descent-en} therefore implies that $\vx$ is block constant blockwise, so $\vx\in U_m(LP)$ and
\[
\bar{\vx}:=\pi_{m,L}(\vx)\in\bar{\mathcal L}_X
\]
is well defined. Moreover,
\[
\vx=\iota_{m,L}(\bar{\vx}).
\]
If $\vx\neq 0$, then $\bar{\vx}\neq 0$ and
\[
\wt(\vx)=m\,\wt(\bar{\vx})\ge m\tau_X.
\]
Every latent $X$-logical representative is a nonzero vector in $C_Z\cap \Row(\Htil_X)$, so $d_X^{(\mathrm{lat})}\ge m\tau_X$.
The $Z$ case is identical with $\Ker(H_X\Htil_Z^{\mathsf T})\subseteq U_m(sP)$.
\end{proof}

This theorem gives the lower-bound part of exact latent certification.  Once the
kernel structure is known to be block constant, excluding low-weight vectors in
the quotient space proves the stated lower bound in the original code.

To make the kernel hypothesis checkable, one replaces it by a rank computation.

\begin{lemma}[Rank test for block-constant kernels]
\label{lem:kernel-rank-test-en}
Let $\ve_0,\dots,\ve_{P-1}$ be the standard basis of $\Ftwo^P$. Define
\[
B_m\in\Ftwo^{(m-1)Q\times P}
\]
to be the matrix whose rows are
\[
\ve_t+\ve_{t+jQ}
\qquad
(0\le t\le Q-1,\ 1\le j\le m-1).
\]
Then
\[
U_m(P)=\Ker(B_m).
\]
Moreover, for every binary matrix $A\in\Ftwo^{r\times P}$,
\[
\Ker(A)\subseteq U_m(P)
\iff
\Row(B_m)\subseteq \Row(A)
\iff
\rank\!\begin{bmatrix}A\\ B_m\end{bmatrix}=\rank(A).
\]
\end{lemma}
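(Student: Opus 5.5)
The plan has three parts: identify $U_m(P)$ with $\Ker(B_m)$ directly, then obtain the containment criterion from orthogonal complements over $\Ftwo$, and finally turn it into the rank test.

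\emph{Step 1: $U_m(P)=\Ker(B_m)$.} A row $\ve_t+\ve_{t+jQ}$ of $B_m$ annihilates $\vu$ exactly when $u_t=u_{t+jQ}$. As $j$ runs over $1,\dots,m-1$, the coset $[t]_m$ is exhausted. Hence $B_m\vu^{\mathsf T}=0$ holds if and only if $\vu$ is constant on every coset $[t]_m$, which is the definition of $m$-block constancy.

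\emph{Step 2: $\Ker(A)\subseteq U_m(P)\iff\Row(B_m)\subseteq\Row(A)$.} Over $\Ftwo$, with the standard bilinear form, every matrix $M$ satisfies $\Ker(M)=\Row(M)^\perp$. Every subspace $V\subseteq\Ftwo^P$ also satisfies $(V^\perp)^\perp=V$, because $\dim V^\perp=P-\dim V$ and $V\subseteq (V^\perp)^\perp$. This holds even though the form may be degenerate on $V$ itself, a point worth noting explicitly since over $\Ftwo$ one has $V\cap V^\perp\neq 0$ in general. With Step 1, the condition $\Ker(A)\subseteq U_m(P)$ reads $\Row(A)^\perp\subseteq\Row(B_m)^\perp$. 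Taking perpendiculars reverses inclusions, and the double-perpendicular identity then gives $\Row(B_m)\subseteq\Row(A)$. The converse is the same chain read backwards: $\Row(B_m)\subseteq\Row(A)$ implies $\Row(A)^\perp\subseteq\Row(B_m)^\perp$.

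\emph{Step 3: the rank form.} The row space of the stacked matrix $\begin{bmatrix}A\\ B_m\end{bmatrix}$ is $\Row(A)+\Row(B_m)$. This sum contains $\Row(A)$, and it equals $\Row(A)$ exactly when $\Row(B_m)\subseteq\Row(A)$. Comparing dimensions, equality of subspaces under containment is equivalent to equality of ranks.

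None of these steps is a real obstacle. The only subtle point is the double-perpendicular identity in Step 2, which is handled by the dimension count rather than by any nondegeneracy of the form.
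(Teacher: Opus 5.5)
Your proposal is correct and follows essentially the same route as the paper: you read $U_m(P)=\Ker(B_m)$ off the coset equalities, use $(\Ker A)^\perp=\Row(A)$ to turn the kernel inclusion into a row-space inclusion, and then express that inclusion as a rank equality. Your dimension-count justification of the double-perpendicular identity spells out a step that the paper states without proof.
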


\begin{proof}
The defining equations of $U_m(P)$ are the equalities $x_t=x_{t+jQ}$, equivalently
\[
(\ve_t+\ve_{t+jQ})\vx^{\mathsf T}=0.
\]
Hence $U_m(P)=\Ker(B_m)$. Over $\Ftwo$ one has
\[
(\Ker(A))^\perp=\Row(A),
\]
so
\[
\Ker(A)\subseteq \Ker(B_m)
\iff
\Row(B_m)\subseteq \Row(A).
\]
The row-space inclusion is equivalent to the displayed rank equality.
\end{proof}

The lemma rewrites the block-constant kernel condition as a linear-algebraic
rank condition.  This gives a finite check for the hypothesis used in
Theorem~\ref{thm:latent-lower-en}.

\begin{corollary}[Machine-readable form of the latent kernel hypothesis]
\label{cor:kernel-rank-test-en}
Let $s:=L/2-J$ and define
\[
B_m^{(s)}:=I_s\otimes B_m\in\Ftwo^{s(m-1)Q\times sP}.
\]
Then the kernel hypothesis of Theorem~\ref{thm:latent-lower-en} is equivalent to the two binary rank equalities
\[
\rank\!\begin{bmatrix}H_Z\Htil_X^{\mathsf T}\\ B_m^{(s)}\end{bmatrix}
=\rank(H_Z\Htil_X^{\mathsf T}),
\qquad
\rank\!\begin{bmatrix}H_X\Htil_Z^{\mathsf T}\\ B_m^{(s)}\end{bmatrix}
=\rank(H_X\Htil_Z^{\mathsf T}).
\]
Therefore exact latent certification at block factor $m$ follows from these two
rank checks together with an explicit latent witness and an UNSAT certificate
excluding lower-weight vectors in the compressed latent image.
\end{corollary}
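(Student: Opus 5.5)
The corollary is assembled from Lemma~\ref{lem:kernel-rank-test-en} and Theorem~\ref{thm:latent-lower-en}. First I extend the rank test from one block to $s$ blocks. Then I combine the resulting lower bound with the latent upper bound of Proposition~\ref{prop:latent-upper-en}.

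\textbf{Step 1: the block rank test.} Blockwise $m$-block constancy on $(\Ftwo^P)^s$ means that each of the $s$ length-$P$ blocks lies in $U_m(P)$. Hence
\[
U_m(sP)=\Ker(B_m)^{\oplus s}=\Ker(I_s\otimes B_m)=\Ker(B_m^{(s)}).
\]
Apply the argument of Lemma~\ref{lem:kernel-rank-test-en} verbatim with $B_m$ replaced by $B_m^{(s)}$ and $A:=H_Z\Htil_X^{\mathsf T}$. Since $(\Ker A)^\perp=\Row(A)$ over $\Ftwo$, the inclusion $\Ker(A)\subseteq\Ker(B_m^{(s)})$ is equivalent to $\Row(B_m^{(s)})\subseteq\Row(A)$. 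This inclusion in turn is equivalent to the rank of the stacked matrix being equal to $\rank(A)$, because stacking can only increase the rank, with equality exactly when the new rows lie in the old row space. The same argument with $A:=H_X\Htil_Z^{\mathsf T}$ handles the $Z$ side. The two rank equalities therefore match, one for one, the two kernel inclusions assumed in Theorem~\ref{thm:latent-lower-en}.

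\textbf{Step 2: exact certification.} Given these rank checks, Theorem~\ref{thm:latent-lower-en} applies. An UNSAT certificate shows that no nonzero vector of weight below $\tau_X$ exists in $\bar{\mathcal L}_X$, which yields $d_X^{(\mathrm{lat})}\ge m\tau_X$. An explicit latent witness $\vx$ verified through Proposition~\ref{prop:latent-upper-en} gives $d_X^{(\mathrm{lat})}\le\wt(\vx)$. If $\wt(\vx)=m\tau_X$, the two bounds coincide and the latent distance is determined exactly. The $Z$ side is handled identically.

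\textbf{Main obstacle.} The only delicate point is interpretive. The UNSAT certificate must be stated over the compressed image $\bar{\mathcal L}_X$ as the theorem requires, namely over all nonzero vectors and not only over those outside $C_X^\perp$. This is precisely what the theorem's hypothesis uses, so excluding all lower-weight nonzero vectors suffices. I would also note explicitly that "exact" means the witness weight equals $m\tau_X$.
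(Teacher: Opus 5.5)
Your proposal is correct and follows the derivation the paper intends (the paper gives no separate proof). You apply the rank test of Lemma~\ref{lem:kernel-rank-test-en} blockwise via $U_m(sP)=\Ker(I_s\otimes B_m)$, then combine the lower bound of Theorem~\ref{thm:latent-lower-en} with a witness from Proposition~\ref{prop:latent-upper-en}. Your remarks that the UNSAT exclusion must cover all nonzero vectors of $\bar{\mathcal L}_X$, and that ``exact'' means the witness weight equals $m\tau_X$, correctly spell out what the paper leaves implicit.
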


\begin{example}
In the specialized $(J,L)=(3,12)$ family with $\Psi_r=0$ for $r\neq 3$, the mixed products reduce to
\[
H_Z\Htil_X^{\mathsf T}=\diag(\Psi_3^{\mathsf T},\Psi_3^{\mathsf T},\Psi_3^{\mathsf T}),
\qquad
H_X\Htil_Z^{\mathsf T}=\diag(\Psi_3,\Psi_3,\Psi_3).
\]
Thus Theorem~\ref{thm:latent-lower-en} and Corollary~\ref{cor:kernel-rank-test-en}
reduce to checking block-constant structure for the kernels of $\Psi_3$ and
$\Psi_3^{\mathsf T}$.  These are the rank checks used for the exact latent
certificates reported here.
\end{example}

\begin{example}
The exact $P=768$ code printed in Table~1 of~\cite{Kasai2026} also satisfies this block-constant kernel hypothesis. For this code,
\[
\rank(\Psi_3)=576,\qquad \dim\Ker(\Psi_3)=192,
\]
so the relevant block factor is $m=4$ with $Q=192$, and the kernels of both $\Psi_3$ and $\Psi_3^{\mathsf T}$ are $4$-block constant.
The SAT/SMT exclusion on the compressed latent image also succeeds, so for this published code one obtains the exact latent values
\[
d_X^{(\mathrm{lat})}=d_Z^{(\mathrm{lat})}=48.
\]
Hence the full-fiber block-compression improvement from $d\le 48$ to
$d\le 32$ for the code of~\cite{Kasai2026} does not come from sharpening
\dlat; it comes from the full-fiber restricted-lift witness recorded as the
method-specific value \dblk.  This is still only an intermediate upper bound:
for the same code \(C_9\), the proper selected-fiber restricted-lift witness
recorded in Table~\ref{tab:p768-representative-en} gives the stronger bound
\(d\le 24\).
\end{example}

This completes the latent exactness argument. Once block-constant kernel
structure is verified, the latent lower-bound problem is reduced to proving
nonexistence of compressed latent vectors below a target weight threshold, for
example by SAT or SMT.

\bibliographystyle{IEEEtran}
\bibliography{ub_refs}

\end{document}